%% file: main.tex
\documentclass[sigconf,nonacm]{acmart}

\usepackage{pvldb}

\newcommand{\papertitle}{Logos: Certified Order-Sensitive SQL Rewrites with Mechanized Semantics and LLM Guidance}

\renewcommand\vldbdoi{XX.XX/XXX.XX}
\renewcommand\vldbpages{XXX-XXX}
\renewcommand\vldbavailabilityurl{}

\usepackage{amsmath}
\usepackage{booktabs}
\usepackage{listings}
\usepackage{tikz}
\usepackage{xcolor}
\usepackage{colortbl}
\usepackage{xspace}
\lstdefinestyle{sql}{
  basicstyle=\footnotesize\ttfamily,
  keywordstyle=\bfseries,
  commentstyle=\itshape\color{black!55},
  columns=fullflexible,
  keepspaces=true,
  breaklines=true,
  showstringspaces=false,
  frame=none,
  aboveskip=2pt,
  belowskip=1pt,
  language=SQL
}

\lstdefinestyle{sqlpanel}{
  style=sql,
  basicstyle=\scriptsize\ttfamily,
  backgroundcolor=\color{black!7},
  frame=single,
  framerule=0pt,
  rulecolor=\color{black!7},
  framesep=0pt,
  framexleftmargin=5pt,
  framexrightmargin=5pt,
  framextopmargin=4pt,
  framexbottommargin=4pt,
  aboveskip=0pt,
  belowskip=0pt,
  breaklines=false,
  escapeinside={(*@}{@*)}
}

\newcommand{\logos}{Logos\xspace}
\newcommand{\bestbaselinerate}{64.0\%\xspace}
\newcommand{\logosrate}{86.9\%\xspace}
\newcommand{\logosimprovement}{22.9\xspace}
\newcommand{\querypaneltitle}[1]{%
  \footnotesize\bfseries\ttfamily
  \mbox{-\kern0pt-\kern0pt-\kern0pt-\kern0pt-\hspace{0.6em}%
    #1 Query\hspace{0.6em}-\kern0pt-\kern0pt-\kern0pt-\kern0pt-}}

\newcommand{\Tval}{\mathsf{T}}
\newcommand{\Fval}{\mathsf{F}}
\newcommand{\Uval}{\mathsf{U}}
\newcommand{\Eval}{\mathcal{E}}
\newcommand{\rowsbag}{\mathsf{bag}}
\definecolor{benchmarkblue}{HTML}{2F5D9B}
\definecolor{rqframe}{HTML}{8A6428}
\definecolor{rqfill}{HTML}{F4ECD9}
\newcommand{\benchheat}[2]{\cellcolor{benchmarkblue!#1}\strut #2}
\newcommand{\resultHeader}[1]{%
  \leavevmode\raisebox{-0.92\baselineskip}[0pt][0pt]{#1}}
\newcommand{\resultCount}[2]{#1\,(#2\%)}
\newtheorem{theorem}{Theorem}
\theoremstyle{definition}
\newtheorem{definition}[theorem]{Definition}
\newcommand{\supportfull}{%
  \tikz[baseline=-0.55ex]{\fill (0,0) circle[radius=0.55ex];}}
\newcommand{\supportpartial}{%
  \tikz[baseline=-0.55ex]{%
    \fill (0,0) -- (90:0.55ex)
      arc[start angle=90,end angle=270,radius=0.55ex] -- cycle;
    \draw[line width=0.3pt] (0,0) circle[radius=0.55ex];}}
\newcommand{\supportnone}{%
  \tikz[baseline=-0.55ex]{%
    \draw[line width=0.3pt] (0,0) circle[radius=0.55ex];}}

\begin{document}

\title{\papertitle}

\author{Jingyu Ke}
\orcid{0009-0008-6848-3105}
\affiliation{%
  \institution{Shanghai Jiao Tong University}
  \city{Shanghai}
  \country{China}
}
\email{Windocotber@sjtu.edu.cn}

\author{Jingyang Li}
\orcid{0000-0002-3707-4623}
\affiliation{%
  \institution{Shanghai Jiao Tong University}
  \city{Shanghai}
  \country{China}
}
\email{lijjjjj@sjtu.edu.cn}

\author{Guoqiang Li}
\orcid{0000-0001-9005-7112}
\affiliation{%
  \institution{Shanghai Jiao Tong University}
  \city{Shanghai}
  \country{China}
}
\email{li.g@sjtu.edu.cn}

\begin{abstract}
SQL rewrite verification must account for duplicate rows, observable row order,
and typed value semantics. Existing verifiers have yet to combine proofs over
database instances of arbitrary finite cardinality with an ordered-list semantics for
nested, tie-sensitive top-$k$. Unbounded systems reason primarily over bags or
handle ordering through syntax-directed restrictions, whereas bounded systems
either support only restricted top-$k$ forms or impose a deterministic ordering
rather than retain all legal tie-induced outcomes. Support for typed expression
and aggregate semantics, observable runtime errors, and integrity constraints
also remains partial.

In Rocq, we mechanize a compositional logical semantics for a typed SQL
core with order-sensitive operators, capturing all possible ordered lists and
observable SQL failures in the supported fragment. To our
knowledge, this is the first mechanized SQL semantics to combine nested,
tie-sensitive top-$k$ with a closure-based lifting from bag equivalence to
ordered-list equivalence, enabling sound reuse of bag-theoretic reasoning while preserving
compositionality across order-sensitive and correlated contexts. The
formalization further provides executable semantics for PostgreSQL-oriented
scalar and aggregate evaluation and an explicit account of integrity constraints.
Building on this semantics, we present \logos, an LLM-guided Rocq verifier for
unbounded SQL rewrite equivalence. Its agent uses a verified SQL-specific lemma
library to construct query-specific Rocq proofs. Our evaluation covers 389
query pairs from Apache Calcite optimizer tests, TPC-H and TPC-DS rewrites, and
WeTune's real-application workloads. \logos solves \logosrate of them,
compared with \bestbaselinerate for SQLSolver, the strongest baseline.
\end{abstract}

\maketitle

\vldbtopmatter

\input{sections/introduction}
\input{sections/overview}
\input{sections/approach}
\input{sections/implementation}
\input{sections/evaluation}
\input{sections/related-work}
\input{sections/conclusion}

\bibliographystyle{ACM-Reference-Format}
\bibliography{references}

\end{document}

%% file: sections/introduction.tex
\section{Introduction}
\label{sec:introduction}

Query rewriting is a central technique for improving database
performance~\cite{pirahesh1992starburst}. Reported application cases include a
$100\times$ speedup for an individual query and a $3.7\times$ reduction in
production workload latency~\cite{bai2023querybooster,sun2025rbot}. Recent
rewriting systems have also uncovered hundreds of optimization opportunities
missed by existing DBMSs in widely used open-source
applications~\cite{wang2022wetune}. Deploying such rewrites requires
establishing semantic equivalence between the original and rewritten queries.
Specifically, the two queries must have the same observable behavior on every
schema-conforming finite database instance, without any preset bound on
relation cardinality. Testing can refute this obligation by finding a
counterexample, but it cannot establish it. This gap is increasingly
consequential as optimizers combine hand-written rules, workload-derived
transformations, and LLM-generated rewrites; recent industrial evidence shows
that realistic query pairs remain beyond the coverage of current equivalence
checkers~\cite{narasayya2026qoverify}.

The central semantic challenge arises from the interaction between bag
semantics and order-sensitive operators. Some relational arguments are
permutation-invariant and admit concise multiplicity proofs. In contrast,
\texttt{ORDER BY} establishes order, whereas slicing and window operators can
consume row positions. Globally quotienting intermediate results by
permutation therefore erases behavior observable to nested,
tie-sensitive top-$k$, whereas assigning each operator one fixed output list
invents an execution order that SQL leaves unspecified. Retaining all legal
ordered lists preserves this behavior, but introduces a substantial proof
burden: even routine query-specific equivalence arguments require nontrivial
structural induction over lists, often coupled with explicit permutation
reasoning~\cite{chu2017hottsql,chu2018axiomatic}.
Sound verification also requires concrete semantics for typed scalar and
aggregate evaluation, including observable failures, together with a formal
account of integrity constraints. Modeling scalar and aggregate operations as
uninterpreted functions overapproximates their behavior and can admit spurious
counterexamples that cannot arise under concrete SQL semantics.

Existing unbounded verifiers obtain much of their power from bag semantics and
algebraic normalization~\cite{chu2017hottsql,chu2018axiomatic,ding2023sqlsolver,
wang2024qed}. Support for ordered fragments is typically syntax-directed,
requiring syntactic alignment of corresponding order-sensitive operators.
Conversely, list encodings can cover richer SQL but have so far provided
bounded guarantees for restricted order-sensitive
fragments~\cite{he2024verieql}. The missing connection is not another global
choice between lists and bags. A verifier needs a semantics that retains every
possible result order, together with a checked
criterion for using bag theory at a particular subquery.

We present \logos, a proof-producing verifier that realizes this criterion. Its
semantic foundation is FormalSQL, our order-sensitive extension of
SQLCoq~\cite{benzaken2019coqsql}, mechanized in Rocq. The extension provides a
compositional logical semantics for a typed SQL core with order-sensitive
operators, together with executable, PostgreSQL-oriented value and error
interpretations for the supported fragment. Rather than selecting one
implementation-dependent result, FormalSQL assigns each query a relation
containing all possible ordered row lists and observable SQL failures in this
fragment. Applying the order-forgetting map to every successful result yields the query's bag
projection, which preserves multiplicities while discarding order. The
projection may contain multiple distinct bags, for example when top-$k$ cuts
through a tie. We identify a semantic closure condition under which two sets of
legal ordered lists are equivalent exactly when their bag projections are
equal. At such a boundary, standard multiplicity arguments can be used locally
and then lifted through the surrounding order-sensitive context.

\logos then performs LLM-guided counterexample search. If the model predicts
non-equivalence, it must propose a concrete, schema-conforming database
instance. A pair of differing executions on that instance does not in general
establish non-equivalence. When SQL does not determine a unique order, an
engine produces only one of several legal ordered lists; the observed choice
may depend on optimizer-selected scan and join plans and physical storage
order, while the relative order of rows tied on every ordering expression
remains implementation-dependent.
Row-limiting operators can turn this freedom into different selected
subsets~\cite{postgresql18select}. For a proof-producing non-equivalence result,
\logos therefore reifies the candidate as a fixed, typed FormalSQL database
instance and requires the proof agent to establish a countermodel against the
complete outcome relations. The proof exhibits a legal outcome on one side that
no legal outcome on the other can match.

Likewise, when the model predicts equivalence, \logos requires the proof agent
to establish the fixed Rocq equivalence theorem over all schema-conforming
database instances. Building on FormalSQL, we provide a library of verified
lemmas covering general structural properties and
common operator interactions. The library keeps the agent from reproving these
reusable facts and concentrates its effort on the semantic interactions and
feature combinations specific to each query pair.

We evaluate \logos on 389 query pairs spanning Calcite, the TPC
decision-support suites, and WeTune.
\logos solves \logosrate of the pairs, improving over the strongest
baseline, SQLSolver (\bestbaselinerate), by \logosimprovement percentage
points. On the order-sensitive
subset, it solves 73.1\%, compared with 41.3\%
for the strongest baseline on that subset. These results show that
explicit treatment of order and lemma-guided proof synthesis extend
proof-producing verification to substantially broader SQL feature
combinations.

This paper makes three contributions:
\begin{itemize}
  \item We extend FormalSQL with a mechanized, compositional semantics for a
  typed SQL core with order-sensitive operators. Each query denotes all
  possible ordered lists and observable SQL failures in the supported
  fragment. The formalization provides
  executable Rocq semantics for PostgreSQL-oriented scalar and aggregate
  evaluation. This includes \texttt{NULL}-sensitive evaluation and PostgreSQL
  numeric and temporal types (e.g., \texttt{DECIMAL} and
  \texttt{TIMESTAMPTZ}), with all supported failures explicit. To our knowledge,
  this is the first mechanized SQL semantics to combine nested,
  tie-sensitive top-$k$ with a closure-based lifting from bag equivalence to
  ordered-list equivalence while preserving compositionality across order-sensitive and
  correlated contexts.

  \item We implement \logos as a proof-producing prototype for SQL equivalence
  verification. It lowers query pairs in its supported PostgreSQL fragment into
  fixed Rocq verification goals and combines LLM-guided synthesis of
  equivalence proofs and certified countermodels with a reusable SQL lemma
  library and domain-specific Rocq automation. Every resulting certificate is
  checked in isolation by the Rocq kernel.

  \item We evaluate \logos against state-of-the-art SQL equivalence verifiers
  on a diverse suite of optimizer, decision-support, and real-application
  benchmarks, including Calcite, the TPC decision-support suites, and WeTune. \logos
  solves \logosrate of the query pairs, exceeding the best-performing
  baseline, SQLSolver (\bestbaselinerate), by \logosimprovement percentage
  points. These
  results show that mechanized SQL semantics
  combined with LLM-guided proof search substantially broadens the practical
  coverage of proof-producing equivalence verification.
\end{itemize}

%% file: sections/overview.tex
\section{Overview}
\label{sec:overview}

\begin{figure}[!t]
  \centering
\begin{lstlisting}[style=sqlpanel]
(*@\hfill\querypaneltitle{Source}\hfill\null@*)
SELECT *
FROM users
WHERE active IS TRUE
  AND users.id IN (
    SELECT user_id
    FROM group_users
    WHERE group_id = 8861)
LIMIT 20;
\end{lstlisting}
\vspace{2pt}
\begin{lstlisting}[style=sqlpanel]
(*@\hfill\querypaneltitle{Target}\hfill\null@*)
SELECT users.*
FROM users
INNER JOIN group_users
  ON group_users.user_id = users.id
WHERE active IS TRUE
  AND group_users.group_id = 8861
LIMIT 20;
\end{lstlisting}
  \caption{A real-world Discourse rewrite from the WeTune benchmark that
  replaces an \texttt{IN} membership test with a join.}
  \Description{Both queries return at most twenty active users belonging to
  group 8861. The source tests membership with an IN subquery. The target
  joins users with the group-users table. Neither query specifies an order.}
  \label{fig:running-example}
\end{figure}

Figure~\ref{fig:running-example} shows a real-world Discourse rewrite from the
WeTune benchmark~\cite{wang2022wetune}.  The source selects active users whose
identifiers occur among the members of group 8861.  The target replaces this
membership test with an inner join.  The schema declares
$(\mathit{group\_id},\mathit{user\_id})$ unique in \texttt{group\_users}, so a
fixed group contributes at most one matching membership row per user.  Both
queries then retain at most 20 rows without specifying an order.

Consider a database in which the relevant rows are summarized below.  The
symbols $u_1,\ldots,u_{21}$ stand for distinct complete rows of
\texttt{users}.
\begin{center}
  \small
  \begin{tabular}{@{}lll@{}}
    \toprule
    User rows & \texttt{active} & Membership in group 8861\\
    \midrule
    $u_1,\ldots,u_{21}$ & \texttt{TRUE}  & one row per user\\
    $v$                 & \texttt{FALSE} & one row\\
    $w$                 & \texttt{TRUE}  & none\\
    \bottomrule
  \end{tabular}
\end{center}
Before applying \textsc{limit}, the source core retains exactly
$u_1,\ldots,u_{21}$: the inactive row $v$ fails the first predicate, and $w$
fails the membership test.  The target core reaches the same rows through the
join.  Uniqueness is essential here; it ensures that the join emits each $u_i$
once rather than multiplying it by several matching membership rows.  Thus,
immediately before \textsc{limit}, the two query cores produce the same bag of
21 user rows with the same multiplicities and modeled failures.

At this boundary, neither core establishes an order.  Every permutation of
the 21 rows is therefore a legal result, but all of these ordered lists have
the same underlying bag.  The collection of possible bags consequently has
one member before \textsc{limit}.

The limit makes the distinction observable.  Because any $u_i$ may occur
last, taking the first 20 rows can omit any one of the 21 users.  The complete
query therefore has 21 possible result bags, each containing a different
20-user subset.  \logos records these alternatives separately instead of
selecting one engine-dependent order or combining the alternatives into a
single bag.

Preserving this set of alternatives is necessary for equivalence reasoning,
especially when the query is embedded as a subquery.  Replacing the 21 possible
bags by one arbitrarily selected bag would make the comparison depend on that
choice: two equivalent queries could be assigned different 20-row bags even
though they admit the same complete set of alternatives.  The resulting loss
of correspondence could cause equivalent queries to be reported as
non-equivalent, and the discrepancy could propagate through the enclosing query.

\logos makes these alternatives explicit.  Write $B_{21}$ for the bag that
contains each of $u_1,\ldots,u_{21}$ once.  On the database above, its compact
summary of the possible bags at the two relevant boundaries is
\[
  \text{before \textsc{limit}: }\{B_{21}\},
  \qquad
  \text{after \textsc{limit}: }
  \{B_{21}\setminus\{u_i\}\mid 1\leq i\leq 21\},
\]
where subtraction removes the single occurrence of $u_i$.  The first set says
that all legal core results share one bag; the second keeps the 21 alternatives
created when \textsc{limit} observes their positions.  This bag summary is not
itself the query semantics.  \logos retains ordered row sequences as its
successful outcomes and keeps modeled SQL errors as separate outcomes.  It
forgets order only at a local proof boundary, without merging distinct possible
bags.

To verify the rewrite, \logos first proves that the two cores produce the same
$B_{21}$.  It then establishes that both cores are \emph{bag closed}: every
ordering of $B_{21}$ is legal.  These facts
allow the local bag proof to recover equality of all ordered core outcomes, as
formalized in Section~\ref{sec:bag-abstraction}.  Semantic congruence then
carries this equality through the common \textsc{limit} and, when present,
through any enclosing query.  Consequently, the complete queries preserve
precisely the same 21 possible bags above, while genuine errors remain visible
throughout the query context.

%% file: sections/approach.tex
\section{Formal Semantics and Semantic Decomposition}
\label{sec:technique}

The formal development of \logos is grounded in an ordered-list semantics of query
execution, from which bag reasoning is derived as a controlled abstraction.
Section~3.1 fixes
the notation used throughout the paper and presents the typed core syntax of
our supported SQL fragment.  Section~3.2 gives this syntax an error-aware
execution semantics and defines the resulting verification objective.
Section~3.3 then establishes the closure conditions under which bag equivalence
can be lifted to ordered outcome equivalence and composed with surrounding
operators.  Finally, Section~3.4 instantiates the development on the
running example from Section~\ref{sec:overview}.

\subsection{Notation and Typed Core Syntax}
\label{sec:surface-fragment}
\label{sec:core-language}

Let $\tau$ range over the supported SQL types, $\mathcal V_\tau$ over their
non-NULL values, and
$\mathcal V_\tau^\bot=\mathcal V_\tau\uplus\{\mathsf{NULL}_\tau\}$ over their
nullable values.  A typed attribute $a:\tau$ associates the column identifier
$a$ with its SQL type $\tau$.  Write $v_1\approx_\tau v_2$ when two nullable
values agree in the typed SQL value model; in particular,
$\mathsf{NULL}_\tau\approx_\tau\mathsf{NULL}_\tau$.  An ordered relation signature
$\sigma=(a_1:\tau_1,\ldots,a_m:\tau_m)$ is a finite sequence of typed
attributes with distinct identifiers and fixes both column order and types,
with
\begin{equation}
  \mathcal V_\sigma
  =\mathcal V_{\tau_1}^\bot\times\cdots\times
   \mathcal V_{\tau_m}^\bot.
  \label{eq:row-domain}
\end{equation}
Thus, $\mathcal V_\sigma$ is the domain of well-typed row values conforming to
$\sigma$, and $r\in\mathcal V_\sigma$ denotes a particular row value.  The
$i$th component of $r$ is a nullable value of type $\tau_i$; for
$a_i:\tau_i\in\sigma$, write $r[a_i]$ for this component.  Rows satisfy
$r_1\approx_\sigma r_2$ when their corresponding components are related by
the appropriate $\approx_\tau$.

\begin{definition}[Finite bag]
\label{def:finite-bag}
For a set $X$, a finite bag over $X$ is a function $B:X\to\mathbb N$ whose
support $\{x\in X\mid B(x)>0\}$ is finite; $B(x)$ is the multiplicity of $x$.
We write $\mathsf{Bag}(X)$ for the set of finite bags over $X$.  When $X$
carries a semantic equivalence $\approx_X$, let $\#_B(x)$ count all
occurrences in $B$ equivalent to $x$, and write $B_1=_{\mathsf B}B_2$ when
$\#_{B_1}(x)=\#_{B_2}(x)$ for every $x\in X$.
\end{definition}

\begin{definition}[Finite list]
\label{def:finite-list}
For a set $X$, the set of all finite lists over $X$ is
\begin{equation}
  \mathsf{List}(X)
  \triangleq\{[x_1,\ldots,x_n]\mid n\in\mathbb N\ \land\
      \forall i\in\{1,\ldots,n\}.\ x_i\in X\}.
  \label{eq:finite-list}
\end{equation}
The case $n=0$ is the empty list $[\,]$; equivalently, finite lists are
generated by $\ell::=[\,]\mid x::\ell$ for $x\in X$.  Positions and repeated
occurrences are retained.  A list $\ell=[x_1,\ldots,x_n]$ induces
$\rowsbag(\ell)\in\mathsf{Bag}(X)$ by counting occurrences.  Thus $\rowsbag$
forgets order but retains multiplicity; we write
$\{\!\{x_1,\ldots,x_n\}\!\}$ for the resulting bag.  For
$B\in\mathsf{Bag}(X)$, write $\mathsf{Rows}(B,\ell)$ when
$\rowsbag(\ell)=_{\mathsf B}B$.
\end{definition}

\begin{definition}[Schema and schema contract]
\label{def:schema-contract}
A schema $\Sigma$ maps each relation name $R$ to an ordered relation signature
$\Sigma(R)$.  A schema contract $\Gamma$ is a finite set of integrity
constraints over the relations and attributes declared by $\Sigma$.
\end{definition}

\begin{definition}[Finite database]
\label{def:finite-database}
A finite database $D$ over $\Sigma$ maps each declared relation $R$ to
$D(R)\in\mathsf{Bag}(\mathcal V_{\Sigma(R)})$.  We write
$D\models(\Sigma,\Gamma)$ when $D$ satisfies every constraint in $\Gamma$.
\end{definition}

\input{figures/core-syntax}

Figure~\ref{fig:sql-admission-core} presents the typed core syntax used by
\logos.  It covers the supported SQL query expressions, scalar expressions,
predicates, and the integrity constraints admitted by schema contracts.  The
metavariables $f$ and $\theta$ range over the supported typed scalar and
predicate operators, $n\in\mathbb N$, and $\varphi_{\rm row}$ is a well-typed,
subquery-free predicate in the scope of its declared relation.  The list
$W=[w_1,\ldots,w_u]$ contains typed window outputs
$w=(a:\tau,\psi)$, where $\psi$ is row number, a cumulative aggregate over the
current partition prefix, or an aggregate over the complete partition.
The corresponding semantics supports common aggregate functions and typed
scalar operators over the modeled SQL types.

The query-expression category $Q$ begins with two constant relations.
$\mathsf{Empty}(\sigma)$ denotes the empty relation of signature $\sigma$,
whereas $\mathsf{Singleton}$ denotes the relation of empty signature $()$ whose
sole row is the empty tuple $\langle\rangle$.  The latter supplies a unit input
from which extended projection can construct a concrete row.  Since
$\mathsf{Project}$ evaluates its list once for each input row,
$\mathsf{Singleton}$ makes that evaluation occur exactly once.  For example,
\[
  \mathsf{Project}\bigl(
    [(1\!:\!\texttt{INTEGER},x\!:\!\texttt{INTEGER})],
    \mathsf{Singleton}\bigr)
\]
has signature $(x:\texttt{INTEGER})$ and contains the single row
$\langle 1\rangle$; it is the core representation of the single-row table
constructor \texttt{VALUES (1)}.  Bag union of such terms represents a
multirow \texttt{VALUES} constructor.

The remaining query constructors cover the relational and order-sensitive
operators listed in Figure~\ref{fig:sql-admission-core}.  Section~3.2 assigns
them ordered-list semantics and specifies how modeled errors propagate
from demanded subterms.  The core set operators are the \textsc{all} variants
with bag multiplicities; surface \textsc{distinct} variants are normalized
into these operators and $\mathsf{Distinct}$.  Within this syntax, inner joins
are represented by filtering a cross join, because the separate
$\mathsf{Join}$ constructor is reserved for outer, semi, and anti joins.

\subsection{Error-Aware Execution Semantics and Equivalence}
\label{sec:observations}
\label{sec:problem}

Query execution depends on more than the database contents.  A subquery may
refer to row and group bindings supplied by its surrounding context, while
unordered inputs and unresolved sort ties may admit multiple ordered lists.
We first formalize the SQL program state that makes these dependencies
explicit.

\begin{definition}[SQL program state]
\label{def:sql-program-state}
An SQL binding is a triple
$\beta=(\sigma_\beta,m_\beta,\ell_\beta)$, where $\sigma_\beta$ is its row
signature, $m_\beta$ is either $\mathsf{Row}$ or $\mathsf{Group}(G)$, and
$\ell_\beta\in\mathsf{List}(\mathcal V_{\sigma_\beta})$.  A row binding induced
by $r\in\mathcal V_\sigma$ is
$\mathsf{row}_\sigma(r)=(\sigma,\mathsf{Row},[r])$; a group binding stores in
$\ell_\beta$ all rows consumed by aggregate expressions.  For
$g\in\mathsf{List}(\mathcal V_\sigma)$, write
$\mathsf{gbind}_{\sigma,G}(g)=(\sigma,\mathsf{Group}(G),g)$.

An SQL program state is a pair
\begin{equation}
  \Omega=(D,\rho),
  \label{eq:sql-program-state}
\end{equation}
where $D$ is the finite database and
$\rho=\beta_0::\cdots::\beta_k::\epsilon$ is a list of visible row and group
bindings, with the head $\beta_0$ denoting the innermost scope.  The top-level
environment is $\epsilon$, and an extension prepends its new binding to $\rho$.
\end{definition}

\input{figures/core-semantics}

\paragraph{Binding resolution.}
For a well-scoped typed attribute $a:\tau$, define
$i=\min\{j\mid a:\tau\in\sigma_{\beta_j}\ \land\
\ell_{\beta_j}\ne[\,]\}$.  If $\ell_{\beta_i}=r::\ell'$, then
$\mathsf{resolve}_{\rho}(a)=r[a]$.
Thus, resolution searches from the list head (the stack top) toward the outer
bindings and returns the value from the first row of the nearest nonempty
binding in which the attribute is visible.  Attribute evaluation is given by
\[
  \frac{\mathsf{resolve}_{\rho}(a)=v}
       {(D,\rho)\vdash a:\tau\Downarrow\mathsf{ok}(v)}
  \quad(\textsc{Attribute})
\]
A row binding supplies values to scalar expressions and predicates; a group
binding additionally supplies the rows consumed by aggregate expressions.
Subqueries inherit the current binding list, so an attribute not resolved by the
innermost binding may refer to an enclosing one, as required for correlated
subqueries.

Let $\mathsf B_3=\{\Tval,\Fval,\Uval\}$ and let $\mathsf{Err}$ be the domain
of modeled SQL failures.  Lowercase $q$ ranges over well-typed terms generated
by $Q$, while $V$ and $P$ range over the corresponding scalar and predicate
terms.  We write
$\Omega\vdash q\Downarrow\mathsf{ok}(\ell)$ when executing query $q$ in
program state $\Omega$ may return the ordered row list
$\ell\in\mathsf{List}(\mathcal V_\sigma)$, and write
$\Omega\vdash q\Downarrow\mathsf{err}(e)$ when it may instead fail with
$e\in\mathsf{Err}$.  The symbol $\Downarrow$ is relational: the same query may
derive multiple ordered-list outcomes.  Analogously,
$\Omega\vdash V\Downarrow\mathsf{ok}(v)$ returns
$v\in\mathcal V_\tau^\bot$, and
$\Omega\vdash P\Downarrow\mathsf{ok}(b)$ returns $b\in\mathsf B_3$; either
judgment may instead return $\mathsf{err}(e)$.
Scalar calls use their concrete typed interpretations, predicates use SQL
three-valued logic, and $\mathsf{Case}$ executes only its selected branch.  A
scalar subquery returns its sole value, returns a typed \textsc{null} on an
empty result, and raises a cardinality error on multiple rows.  The
$\mathsf{In}$ judgment applies SQL three-valued existential comparison, while
$\mathsf{Exists}$ observes nonemptiness and evaluates only the subterms needed
to determine it; both preserve errors from demanded subterms.

Let $Q_\sigma$ denote the well-typed query terms with output signature
$\sigma$, and write $\mathcal O_\sigma$ for the semantic domain of their
results:
\[
  \mathcal O_\sigma
  =\{\mathsf{ok}(\ell)\mid
      \ell\in\mathsf{List}(\mathcal V_\sigma)\}
   \uplus
   \{\mathsf{err}(e)\mid e\in\mathsf{Err}\}.
\]
For a fixed database $D$ and binding list $\rho$, query semantics is the
set-valued map
$\Eval_{D,\rho}:Q_\sigma\to\mathcal P(\mathcal O_\sigma)$, defined by
\begin{equation}
  \Eval_{D,\rho}(q)=\{o\in\mathcal O_\sigma\mid
    (D,\rho)\vdash q\Downarrow o\}.
  \label{eq:outcome-relation}
\end{equation}
Thus, a query denotes its complete relation of legal ordered lists and
observable failures, rather than one execution selected by an engine.

\paragraph{Auxiliary judgments.}
Figure~\ref{fig:core-query-semantics} defines the big-step semantics for most
of the well-typed SQL query expressions introduced in
Figure~\ref{fig:sql-admission-core}.\footnote{The complete typing and semantic
definitions, including the rules omitted from
Figure~\ref{fig:core-query-semantics}, are available in the FormalSQL
mechanization at
\url{https://github.com/WindOctober/FormalSQL}.}
We next explain the auxiliary judgments and relations used in these rules.
The superscript ${\mathsf B}$ marks an auxiliary relation that treats successful
query inputs as bags: the enclosing rule replaces each child list $\ell$ by
$\rowsbag(\ell)$, thereby discarding positions while retaining multiplicities.
The auxiliary relation operates on these bags and produces a result bag $B$,
which represents every ordered list $\ell'$ satisfying
$\mathsf{Rows}(B,\ell')$; errors are propagated unchanged.  This convention
keeps the displayed judgments in the common result domain
$\mathcal O_\sigma$.

For $\Omega=(D,\rho)$ and $r\in\mathcal V_\sigma$, define
\[
  \Omega[r]=(D,\mathsf{row}_\sigma(r)::\rho),
\]
which makes $r$ the innermost visible row.  This extension is used whenever an
auxiliary judgment evaluates an expression against one input row.

\paragraph{Bag-level auxiliary relations.}
Recall that $\mathsf{Rows}(B,\ell)$ makes $\ell$ an arbitrary ordered
representative of the bag $B$.  The operation
$\mathsf{set}^{\mathsf B}_{\mathit{op}}$ applies the
SQL bag operation designated by $\mathit{op}$; for example, the
\textsc{union all} case adds multiplicities:
\[
  \#_{\mathsf{set}^{\mathsf B}_{\mathsf{Union}}(B_1,B_2)}(r)
  =\#_{B_1}(r)+\#_{B_2}(r).
\]
The operator $\mathsf{distinct}^{\mathsf B}$ assigns multiplicity one to every
semantic row class occurring in its input bag.  For rows
$r_i\in\mathcal V_{\sigma_i}$ with disjoint attributes, let
$r_1\mathbin{\|}r_2$ be their concatenation, of signature
$\sigma_1\mathbin{\|}\sigma_2$.  The bag product satisfies
\[
  \#_{B_1\otimes B_2}(r_1\mathbin{\|}r_2)
  =\#_{B_1}(r_1)\#_{B_2}(r_2).
\]

The auxiliary join judgment
$\mathsf{join}^{\mathsf B}_{j,P}(B_1,B_2)$ first evaluates $P$ on each candidate
$r_1\mathbin{\|}r_2$ under the corresponding row binding.  A pair matches
exactly when this evaluation returns $\Tval$; $\Fval$ and $\Uval$ do not
match, and an error is propagated.  The join kind $j$ then determines the
result: left, right, and full joins add their respective null-extended
unmatched rows, while semi and anti joins retain a left row according to
whether a match exists.  The auxiliary judgment then admits every ordered
representative of the resulting bag.

Grouping illustrates how a compound operator is defined at this boundary.
Recall from Figure~\ref{fig:sql-admission-core} that
$G=[V_1,\ldots,V_s]$ is a well-typed list of scalar expressions.  Let
$\sigma_G$ be its induced ordered result signature.  Evaluating $G$ under
$\Omega[r]$ assembles a key row $k\in\mathcal V_{\sigma_G}$, written
$\Omega[r]\vdash G\Downarrow\mathsf{ok}(k)$.  The grouping relation has
signature
\[
  \mathsf{Groups}_{\Omega,G}\subseteq
  \mathsf{Bag}(\mathcal V_\sigma)\times
  \mathsf{List}(\mathsf{List}(\mathcal V_\sigma)).
\]
To derive $\mathsf{Groups}_{\Omega,G}(B,\Pi)$, choose an ordered
representative $\ell=[r_1,\ldots,r_n]$ of $B$ and evaluate one key row for
each occurrence:
\[
  \mathsf{Rows}(B,\ell),
  \qquad
  \bigwedge_{1\le i\le n}
    \Omega[r_i]\vdash G\Downarrow\mathsf{ok}(k_i).
\]
Rows whose key rows are related by $\approx_{\sigma_G}$ belong to the same
group.  Taking the first occurrence of each key as its representative gives
\[
  \Pi=\Bigl[\,[r_t\mid k_t\approx_{\sigma_G}k_i]\ \Bigm|
    \substack{1\le i\le n,\\
      \neg\exists h<i.\ k_h\approx_{\sigma_G}k_i}\,\Bigr].
\]
Both comprehensions range over input positions in increasing order, so
duplicates are retained and first-occurrence order is preserved.  The SQL
global-group convention is
$\mathsf{Groups}_{\Omega,[\,]}(\varnothing,[[\,]])$.

For $\Omega=(D,\rho)$ and $g\in\mathsf{List}(\mathcal V_\sigma)$, write
\[
  \Omega\langle G,g\rangle
  =(D,\mathsf{gbind}_{\sigma,G}(g)::\rho).
\]
The judgment $\Omega\vdash\mathsf{groups}_{L,G,P}(\Pi)\Downarrow o$
processes each $g\in\Pi$ under $\Omega\langle G,g\rangle$.  It
first finalizes the aggregate applications required by $L$ and $P$, then
evaluates $P$ as \textsc{having}.  On success it produces
\[
  \ell_g=\bigl[\,r_g\mid g\leftarrow\Pi,\;
    \Omega\langle G,g\rangle\vdash P\Downarrow\mathsf{ok}(\Tval),\;
    \Omega\langle G,g\rangle\vdash L\Downarrow\mathsf{ok}(r_g)\,\bigr].
\]
Thus $\Fval$ and $\Uval$ discard a group, while demanded failures are
propagated.

\paragraph{Order-aware list operations.}
The order-sensitive auxiliary operations act on concrete list positions.
Recall from Figure~\ref{fig:sql-admission-core} that
$L=[(V_1,a_1:\tau_1),\ldots,(V_k,a_k:\tau_k)]$.  Its left-to-right
evaluation is specified by the following rule schemes:
\[
  \frac{\strut}
       {\Omega[r]\vdash[\,]\Downarrow\mathsf{ok}(\langle\rangle)}
  \ (\textsc{List-Nil})
\]
\[
  \frac{\Omega[r]\vdash V_j\Downarrow\mathsf{ok}(v_j)
          \quad(1\le j\le k)}
       {\Omega[r]\vdash L\Downarrow
          \mathsf{ok}(\langle a_1=v_1,\ldots,a_k=v_k\rangle)}
  \ (\textsc{List-Ok})
\]
\[
  \frac{\substack{\Omega[r]\vdash V_j\Downarrow\mathsf{ok}(v_j)\\
                   1\le j<i}
        \qquad
        \Omega[r]\vdash V_i\Downarrow\mathsf{err}(e)}
       {\Omega[r]\vdash L\Downarrow\mathsf{err}(e)}
  \ (\textsc{List-Err})
\]
For $r\in\mathcal V_\sigma$ and $o\in\mathcal O_\sigma$, projection and
filtering use the following combinators:
\[
  \begin{aligned}
  \mathsf{cons}(r,o)&=
    \begin{cases}
      \mathsf{ok}(r::\ell),&o=\mathsf{ok}(\ell),\\[-1pt]
      \mathsf{err}(e),&o=\mathsf{err}(e),
    \end{cases}\\[-1pt]
  \mathsf{keep}(b,r,o)&=
    \begin{cases}
      \mathsf{cons}(r,o),&b=\Tval,\\[-1pt]
      o,&b\in\{\Fval,\Uval\}.
    \end{cases}
  \end{aligned}
\]
Projection consequently emits one row for each input position, whereas
filtering returns a stable subsequence.  Both judgments preserve the first
error encountered in a demanded row.

A sort specification $K=[\kappa_1,\ldots,\kappa_s]$ is a lexicographic list
of keys $\kappa=(a:\tau,\delta,\nu)$, where $\delta$ chooses ascending or
descending comparison and $\nu$ places \textsc{null}s first or last.  Write
$r\preceq_K r'$ when the first key on which $r$ and $r'$ differ orders $r$
before $r'$ (or all keys tie).  Then
\[
  \mathsf{sorted}_K([r_1,\ldots,r_m])
  \quad\Longleftrightarrow\quad
  \forall i<m.\ r_i\preceq_K r_{i+1}.
\]
Together, $\mathsf{Rows}(\rowsbag(\ell_0),\ell)$ and
$\mathsf{sorted}_K(\ell)$ state that $\ell$ is a $K$-sorted permutation of
the input occurrences; no order is imposed between rows that tie on all keys.
For $\ell=[r_1,\ldots,r_m]$, the positional operators are
\[
  \begin{aligned}
  \mathsf{take}_n(\ell)&=[r_1,\ldots,r_{\min(n,m)}],\\[-1pt]
  \mathsf{drop}_n(\ell)&=[r_{\min(n,m)+1},\ldots,r_m].
  \end{aligned}
\]
Here an empty index interval denotes $[\,]$.

The rules in Figure~\ref{fig:core-query-semantics} expose the order
boundary directly.
Projection and filtering transform one selected child list without reordering
it.  Table scans, bag operations, and the auxiliary judgments for joins and
grouping do not preserve child positions; they directly
admit their legal ordered outputs.  Sorting admits every bag-preserving list
satisfying $K$, so ties remain nondeterministic, whereas offset and fetch
consume concrete positions.

For any outcome relation $E\subseteq\mathcal O_\sigma$, define
\[
  \mathsf{Succ}(E)=\{\ell\mid\mathsf{ok}(\ell)\in E\},
  \qquad
  \mathsf{Errs}(E)=\{e\mid\mathsf{err}(e)\in E\}.
\]

\begin{definition}[Observation and outcome equivalence]
\label{def:outcome-equivalence}
Using the typed value and row equivalences from Section~3.1, write
$\ell_1\approx_{\mathsf L}\ell_2$ when two lists have equal length and their
rows are related positionwise by $\approx_\sigma$.  Two sets of legal lists
match in both directions when
\begin{equation}
  \begin{aligned}
  O_1\equiv_{\mathsf L}O_2\quad\Leftrightarrow\quad&
  (\forall\ell_1\in O_1.\ \exists\ell_2\in O_2.\;
      \ell_1\approx_{\mathsf L}\ell_2)\ \land\\[-2pt]
  &(\forall\ell_2\in O_2.\ \exists\ell_1\in O_1.\;
      \ell_1\approx_{\mathsf L}\ell_2).
  \end{aligned}
  \label{eq:ordered-list-equivalence}
\end{equation}
For full outcome relations over one signature, define
\begin{equation}
  \begin{aligned}
  E_1\equiv_{\mathsf{out}}E_2
  \Leftrightarrow {}&
  E_1\ne\varnothing\ \land\ E_2\ne\varnothing\\[-2pt]
  &{}\land\
  \mathsf{Succ}(E_1)\equiv_{\mathsf L}\mathsf{Succ}(E_2)\\[-2pt]
  &{}\land\
  \mathsf{Errs}(E_1)=\mathsf{Errs}(E_2).
  \end{aligned}
  \label{eq:outcome-equivalence}
\end{equation}
An outcome relation is safe when it contains at least one successful result
and no failure:
\[
  \mathsf{Safe}(E)
  \Leftrightarrow
  \mathsf{Succ}(E)\ne\varnothing\ \land\
  \mathsf{Errs}(E)=\varnothing.
\]
Safe outcome equivalence is therefore
\begin{equation}
  E_1\equiv_{\mathsf{safe}}E_2
  \Leftrightarrow
  E_1\equiv_{\mathsf{out}}E_2\ \land\
  \mathsf{Safe}(E_1)\ \land\ \mathsf{Safe}(E_2).
  \label{eq:safe-outcome-equivalence}
\end{equation}
The inhabitedness conditions exclude vacuous equivalence of malformed
relations; an equivalence certificate must establish at least one legal
outcome on each side.  Successful results never match errors, and distinct
modeled failures remain distinguishable.
\end{definition}

\label{sec:typed-sql}
Schema conformance is part of the semantic premise.  Constraints in $\Gamma$
are interpreted over the same nullable values and three-valued predicates as
query execution; row predicates and expression-index terms must also evaluate
without error when their constraint requires them.  The complete satisfaction
relation is mechanized in the FormalSQL development cited above.

\begin{definition}[Query equivalence contracts]
\label{def:query-equivalence}
For query terms $q_1$ and $q_2$ with the same output signature, the default
verification contract is unconditional, error-preserving equivalence:
\begin{equation}
  (\Sigma,\Gamma)\models q_1\equiv q_2
  \Leftrightarrow
  \begin{aligned}[t]
  &\forall D\models(\Sigma,\Gamma).\;
    \Eval_{D,\epsilon}(q_1)
      \equiv_{\mathsf{out}}
    \Eval_{D,\epsilon}(q_2).
  \end{aligned}
  \label{eq:query-equivalence}
\end{equation}
Safe equivalence additionally requires both queries to have at least one
successful outcome and no error on any conforming database:
\begin{equation}
  (\Sigma,\Gamma)\models q_1\equiv_{\mathsf{safe}}q_2
  \Leftrightarrow
  \begin{aligned}[t]
  &\forall D\models(\Sigma,\Gamma).\;
    \Eval_{D,\epsilon}(q_1)
      \equiv_{\mathsf{safe}}
    \Eval_{D,\epsilon}(q_2).
  \end{aligned}
  \label{eq:safe-query-equivalence}
\end{equation}
\end{definition}

\paragraph{Conditional extension.}
A well-typed precondition $\Phi$ restricts either contract to databases that
satisfy it.  For the default error-preserving contract, this gives
\begin{equation}
  \forall D\models(\Sigma,\Gamma).\;\Phi(D)\Rightarrow
    \Eval_{D,\epsilon}(q_1)\equiv_{\mathsf{out}}
      \Eval_{D,\epsilon}(q_2).
  \label{eq:conditional-equivalence}
\end{equation}
The safe conditional variant replaces $\equiv_{\mathsf{out}}$ with
$\equiv_{\mathsf{safe}}$.  In either case, the certificate must establish that
$\Phi$ either follows from the input contract or is jointly satisfiable with
it.  Every database quantifier ranges over arbitrary finite cardinalities.
Failed proof search reports
\emph{unknown}; it does not weaken any of these contracts.

The outcome relation remains the specification at every query boundary.
The next section characterizes when its successful ordered lists may
be replaced locally by possible bags.

\subsection{Lifting Bag Equivalence to Ordered-List Equivalence}
\label{sec:bag-abstraction}

Fix an output signature $\sigma$ and let
$O\subseteq\mathsf{List}(\mathcal V_\sigma)$ be a relation of successful
ordered observations.  Its order-forgetting projection is
\begin{equation}
  \mathsf{Bags}(O)=\{B\mid\exists\ell\in O.\;
    \rowsbag(\ell)=_{\mathsf B}B\}.
  \label{eq:bag-projection}
\end{equation}

\begin{definition}[Bag closure]
\label{def:bag-closure}
$O$ is bag closed, written $\mathsf{Closed}(O)$, when
\begin{equation}
  \forall \ell_{\mathrm{req}}.\;
  \rowsbag(\ell_{\mathrm{req}})\in\mathsf{Bags}(O)
  \Longrightarrow
  \exists \ell_{\mathrm{obs}}\in O.\;
    \ell_{\mathrm{req}}\approx_{\mathsf L}\ell_{\mathrm{obs}}.
  \label{eq:bag-closed}
\end{equation}
\end{definition}

The following theorem is the central lifting result: at bag-closed boundaries,
equality of possible bags and failures characterizes outcome equivalence over
ordered lists.
\begin{theorem}[Bag-to-list lifting]
\label{thm:outcome-bag-bridge}
Let $E_1$ and $E_2$ be inhabited outcome relations over the same signature,
and let $O_i=\mathsf{Succ}(E_i)$.  If
$\mathsf{Closed}(O_1)$ and $\mathsf{Closed}(O_2)$, then
\begin{equation}
  E_1\equiv_{\mathsf{out}}E_2
  \Leftrightarrow
  \mathsf{Bags}(O_1)=\mathsf{Bags}(O_2)\ \land\
  \mathsf{Errs}(E_1)=\mathsf{Errs}(E_2).
  \label{eq:bag-to-list-lifting}
\end{equation}
\end{theorem}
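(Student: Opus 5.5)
The plan is to prove the two directions separately. Only the reverse direction needs the closure hypotheses. The inhabitedness conjunct $E_1\ne\varnothing\land E_2\ne\varnothing$ of $\equiv_{\mathsf{out}}$ is supplied directly by the assumption that $E_1,E_2$ are inhabited. The error conjunct $\mathsf{Errs}(E_1)=\mathsf{Errs}(E_2)$ appears verbatim on both sides. So the real content is the equivalence
\[
  O_1\equiv_{\mathsf L}O_2 \;\Leftrightarrow\; \mathsf{Bags}(O_1)=\mathsf{Bags}(O_2)
\]
under $\mathsf{Closed}(O_1)$ and $\mathsf{Closed}(O_2)$.

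First I would establish two auxiliary facts about the value model.

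(a) $\approx_\sigma$ is an equivalence relation. This comes from the typed value equivalences $\approx_\tau$, including reflexivity at $\mathsf{NULL}_\tau$. Consequently $=_{\mathsf B}$ is an equivalence relation on bags, and membership in $\mathsf{Bags}(O)$ is closed under $=_{\mathsf B}$ by transitivity and symmetry.

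(b) If $\ell_1\approx_{\mathsf L}\ell_2$, then $\rowsbag(\ell_1)=_{\mathsf B}\rowsbag(\ell_2)$. I would prove (b) by induction on $\ell_1$, simultaneously destructing $\ell_2$, using the equal-length clause of $\approx_{\mathsf L}$. The key step is that $\#_{\rowsbag(r::\ell)}(x)=[r\approx_\sigma x]+\#_{\rowsbag(\ell)}(x)$. Whenever $r_1\approx_\sigma r_2$, transitivity and symmetry give $[r_1\approx_\sigma x]=[r_2\approx_\sigma x]$ for every $x$.

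For the forward direction, assume $O_1\equiv_{\mathsf L}O_2$ and take $B\in\mathsf{Bags}(O_1)$. By definition there is a witness $\ell_1\in O_1$ with $\rowsbag(\ell_1)=_{\mathsf B}B$. The first half of \eqref{eq:ordered-list-equivalence} gives $\ell_2\in O_2$ with $\ell_1\approx_{\mathsf L}\ell_2$. Fact (b) then yields $\rowsbag(\ell_2)=_{\mathsf B}\rowsbag(\ell_1)=_{\mathsf B}B$, so $B\in\mathsf{Bags}(O_2)$. The reverse inclusion is symmetric, and this direction uses no closure.

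For the backward direction, assume $\mathsf{Bags}(O_1)=\mathsf{Bags}(O_2)$ and take $\ell_1\in O_1$. Then $\rowsbag(\ell_1)\in\mathsf{Bags}(O_1)$ with $\ell_1$ itself as witness, using reflexivity of $=_{\mathsf B}$. Hence $\rowsbag(\ell_1)\in\mathsf{Bags}(O_2)$. Instantiating $\mathsf{Closed}(O_2)$ from \eqref{eq:bag-closed} with $\ell_{\mathrm{req}}=\ell_1$ produces $\ell_{\mathrm{obs}}\in O_2$ with $\ell_1\approx_{\mathsf L}\ell_{\mathrm{obs}}$, which is the first half of \eqref{eq:ordered-list-equivalence}. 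The second half follows symmetrically from $\mathsf{Closed}(O_1)$. Reassembling with the inhabitedness and error conjuncts gives $E_1\equiv_{\mathsf{out}}E_2$.

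I expect the main obstacle to be fact (b) and the setoid bookkeeping around it, not the top-level argument. In particular, one must check that $\#_B$, defined as a count of $\approx$-equivalent occurrences, is invariant under replacing list elements by equivalent ones. This relies on $\approx_\tau$ being a genuine equivalence relation on every supported type, and I would first verify this in the typed value model, especially for \texttt{DECIMAL} and temporal types. If only a partial equivalence were available, I would fall back to restricting $\mathcal V_\sigma$ to values on which $\approx_\tau$ is reflexive. It is also important to read $\mathsf{Bags}(O_1)=\mathsf{Bags}(O_2)$ as extensional equality of $=_{\mathsf B}$-saturated sets, which the definition \eqref{eq:bag-projection} already guarantees.
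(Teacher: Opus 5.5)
Your proposal is correct and follows the paper's proof essentially step for step. The paper likewise proves that $\approx_{\mathsf L}$ preserves $\rowsbag$ up to $=_{\mathsf B}$, uses this without closure for the two inclusions of bag projections, and instantiates $\mathsf{Closed}(O_2)$ and $\mathsf{Closed}(O_1)$ at $\ell_{\mathrm{req}}=\ell_1$ (resp.\ $\ell_2$) for the converse, with inhabitedness and error-set equality discharging the remaining conjuncts; your setoid bookkeeping ($\approx_\sigma$ being an equivalence, and symmetry of $\approx_{\mathsf L}$ in the second matching clause) is left implicit in the paper but is exactly what its argument relies on.
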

\begin{proof}
First observe that position-wise list equivalence preserves row
multiplicities:
\begin{equation*}
  \ell_1\approx_{\mathsf L}\ell_2
  \Longrightarrow
  \rowsbag(\ell_1)=_{\mathsf B}\rowsbag(\ell_2).
  \tag{$\ast$}
\end{equation*}
Indeed, the lists have the same length and pair equivalent rows at every
position, so each row-equivalence class has the same multiplicity.

For the forward direction, assume $E_1\equiv_{\mathsf{out}}E_2$.
Equation~\eqref{eq:outcome-equivalence} immediately gives
\[
  \mathsf{Errs}(E_1)=\mathsf{Errs}(E_2),
  \qquad
  O_1\equiv_{\mathsf L}O_2.
\]
Let $B\in\mathsf{Bags}(O_1)$.  By
Equation~\eqref{eq:bag-projection},
\[
  \exists\ell_1\in O_1.\;
  \rowsbag(\ell_1)=_{\mathsf B}B.
\]
Forward list matching then supplies an $\ell_2\in O_2$ such that
\[
  \ell_1\approx_{\mathsf L}\ell_2
  \overset{(\ast)}{\Longrightarrow}
  \rowsbag(\ell_1)=_{\mathsf B}\rowsbag(\ell_2)
  \Longrightarrow
  B\in\mathsf{Bags}(O_2),
\]
where the last implication also uses symmetry and transitivity of
$=_{\mathsf B}$.  Thus
$\mathsf{Bags}(O_1)\subseteq\mathsf{Bags}(O_2)$.  The backward matching clause
gives the reverse inclusion, and hence
$\mathsf{Bags}(O_1)=\mathsf{Bags}(O_2)$.

Conversely, assume equality of the two bag projections and of the two error
sets.  For any $\ell_1\in O_1$, reflexivity of $=_{\mathsf B}$ and equality of
the projections give
\[
  \ell_1\in O_1
  \Longrightarrow
  \rowsbag(\ell_1)\in\mathsf{Bags}(O_1)
  =\mathsf{Bags}(O_2).
\]
Since $O_2$ is closed, Definition~\ref{def:bag-closure} now supplies
\[
  \exists\ell_2\in O_2.\;
  \ell_1\approx_{\mathsf L}\ell_2.
\]
Thus every observation in $O_1$ has a matching observation in $O_2$.  Applying
the same argument with the sides reversed and using closure of $O_1$ gives
\[
  O_1\equiv_{\mathsf L}O_2.
\]
Finally, the assumed inhabitedness of $E_1$ and $E_2$, together with
error-set equality, discharges all clauses of
Equation~\eqref{eq:outcome-equivalence}.
\end{proof}

\label{sec:closure}
This closure property arises throughout the core query language.  In fact, let
\[
  \begin{aligned}
  \mathcal R=\{&
    \mathsf{Empty},\mathsf{Singleton},\mathsf{Table},\mathsf{Set},\\[-2pt]
    &\mathsf{CrossJoin},\mathsf{Join},\mathsf{Distinct},\mathsf{Group},\\[-2pt]
    &\mathsf{GroupingSets},\mathsf{Rank},\mathsf{Window}\}.
  \end{aligned}
\]
We define the syntactic certificate as
\begin{equation}
  \mathsf{cert}(q)=
  \begin{cases}
    \mathsf{true}, & \text{the outer constructor of $q$ is in $\mathcal R$},\\
    \mathsf{false}, & \text{otherwise}.
  \end{cases}
  \label{eq:closure-certificate}
\end{equation}

\begin{theorem}[Closure-certificate soundness]
\label{thm:certificate-soundness}
For every admitted $q$,
\begin{equation}
  \mathsf{cert}(q)=\mathsf{true}
  \Longrightarrow
  \forall D,\rho.\;
  \mathsf{Closed}\bigl(\mathsf{Succ}(\Eval_{D,\rho}(q))\bigr).
  \label{eq:closure-certificate-soundness}
\end{equation}
\end{theorem}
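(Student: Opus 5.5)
The plan is a case analysis on the outer constructor of $q$, driven by one shared lemma about how each rule in $\mathcal R$ concludes. Fix $D$ and $\rho$, and write $O=\mathsf{Succ}(\Eval_{D,\rho}(q))$.

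\emph{Shared lemma.} For every constructor in $\mathcal R$, every successful derivation $(D,\rho)\vdash q\Downarrow\mathsf{ok}(\ell)$ ends with a premise that some bag $B$ is produced from the children, together with the side condition $\mathsf{Rows}(B,\ell)$. The children's contribution is either a bag-level auxiliary relation or the table contents $D(R)$. No other constraint is placed on $\ell$. Equivalently, the rule is closed under replacing $\ell$ by any $\ell'$ with $\mathsf{Rows}(B,\ell')$, keeping all other premises unchanged.

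This is exactly the convention stated for the superscript-$\mathsf B$ relations. The result bag $B$ represents every ordered list $\ell'$ satisfying $\mathsf{Rows}(B,\ell')$. The paper also notes that scans, bag operations, joins and grouping ``directly admit their legal ordered outputs.''

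\emph{Main argument.} Given the shared lemma, fix $\ell_{\mathrm{req}}$ with $\rowsbag(\ell_{\mathrm{req}})\in\mathsf{Bags}(O)$. By Equation~\eqref{eq:bag-projection} there is some $\ell\in O$ with $\rowsbag(\ell)=_{\mathsf B}\rowsbag(\ell_{\mathrm{req}})$. Invert the derivation of $\ell$ to obtain its bag $B$ with $\mathsf{Rows}(B,\ell)$. By transitivity of $=_{\mathsf B}$ we get $\mathsf{Rows}(B,\ell_{\mathrm{req}})$. Reapplying the same rule with the same premises yields $\ell_{\mathrm{req}}\in O$, and we take $\ell_{\mathrm{obs}}=\ell_{\mathrm{req}}$. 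Reflexivity of $\approx_{\mathsf L}$ then discharges Equation~\eqref{eq:bag-closed}.

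One subtlety is that $=_{\mathsf B}$ counts $\approx_\sigma$-classes rather than syntactic rows. If a rule instead demands $\rowsbag(\ell')$ to be syntactically equal to a bag, I would choose $\ell_{\mathrm{obs}}$ as the positionwise $\approx$-equivalent rearrangement of $\ell$. This is obtained by matching equivalence classes occurrence by occurrence, giving $\ell_{\mathrm{req}}\approx_{\mathsf L}\ell_{\mathrm{obs}}$ by induction on the length of the list. This is why Definition~\ref{def:bag-closure} uses $\approx_{\mathsf L}$ rather than equality.

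\emph{Verifying the lemma per constructor.}
\begin{itemize}
\item $\mathsf{Empty}$ and $\mathsf{Singleton}$: the bag is fixed, and at most one list represents it up to $\approx_{\mathsf L}$.
\item $\mathsf{Table}$: the rule admits any $\ell$ with $\mathsf{Rows}(D(R),\ell)$.
\item $\mathsf{Set}$, $\mathsf{CrossJoin}$, $\mathsf{Join}$, $\mathsf{Distinct}$: they go through $\mathsf{set}^{\mathsf B}$, $\otimes$, $\mathsf{join}^{\mathsf B}$ and $\mathsf{distinct}^{\mathsf B}$ respectively.
\item $\mathsf{Group}$ and $\mathsf{GroupingSets}$: these are the delicate cases. $\mathsf{Groups}_{\Omega,G}$ picks an input representative, and $\mathsf{groups}_{L,G,P}$ emits $\ell_g$ in first-occurrence order, so the output order is determined by the chosen representative.
\end{itemize}

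\emph{Main obstacle.} For $\mathsf{Group}$, $\mathsf{GroupingSets}$, $\mathsf{Rank}$ and $\mathsf{Window}$, I must check that the mechanized rule ends with a final $\mathsf{Rows}(\rowsbag(\ell_g),\ell)$ reshuffle rather than emitting $\ell_g$ directly. The same check applies to $\mathsf{Rank}$ and $\mathsf{Window}$, whose rank and row-number values depend on positions within sorted partitions.

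If the mechanized rule does end with such a reshuffle, the shared lemma applies unchanged. If it does not, my first fallback for grouping is a permutation argument. Permuting the input representative permutes the groups, and any permutation of the output list can be realized by reordering the input so that first occurrences appear in the desired order.

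Windows are harder under that fallback, because reordering can change tie-broken row numbers. There I would rely on the window rule computing each row's window value inside a sorted partition, with the output then emitted as an arbitrary representative of the resulting annotated bag. I expect these inversion and realization lemmas for $\mathsf{Group}$ and $\mathsf{Window}$ to be the bulk of the proof effort.
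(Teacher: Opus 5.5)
Your proposal is correct and follows essentially the same route as the paper's sketch: inspect each constructor in $\mathcal R$, observe that its successful rule ends with a side condition $\mathsf{Rows}(B,\ell)$ (trivially so for $\mathsf{Empty}$ and $\mathsf{Singleton}$), and re-apply the rule with $\ell_{\mathrm{req}}$ in place of $\ell$, an inversion-and-reapplication step you spell out more explicitly than the paper does. Your ``main obstacle'' for $\mathsf{Group}$ is already resolved by the displayed \textsc{Group} rule, whose last premise is $\mathsf{Rows}(\rowsbag(\ell_g),\ell)$, so the permutation-of-input fallback is unnecessary; for $\mathsf{GroupingSets}$, $\mathsf{Rank}$, and $\mathsf{Window}$ the paper does not display the rules and simply asserts that they have the same order-resetting form, which is exactly the shared lemma you assume.
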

\begin{proof}[Proof sketch]
Inspect the execution rule for each constructor in $\mathcal R$.  Its successful
result does not fix the order of the returned row list: once one list represents
the computed bag, every permutation of that list is also admitted.  Hence its
successful observations contain every ordering of each possible result bag,
which is precisely the closure property in
Definition~\ref{def:bag-closure}.
\end{proof}
In contrast, $\mathsf{Project}$ and $\mathsf{Filter}$ preserve the order
selected by their child rather than reset it: applied to $[r_1,r_2]$, they
retain the corresponding projected or retained rows in that order, without
independently admitting the reverse order.
Establishing closure for either operator therefore requires additional premises
about its child and row-level expressions.  A failed certificate leaves the
ordered-list semantics unchanged and does not authorize forgetting order.

Once Theorem~\ref{thm:outcome-bag-bridge} recovers outcome equivalence at
a bag-closed boundary, semantic congruence permits an outcome-equivalent child
to be replaced inside an enclosing operator.  For $\mathsf{Fetch}$, the property
used below is
\[
  \begin{aligned}
  &\Eval_{D,\rho}(q_1)\equiv_{\mathsf{out}}\Eval_{D,\rho}(q_2)\\[-2pt]
  &\quad\Longrightarrow
    \Eval_{D,\rho}(\mathsf{Fetch}(n,q_1))
    \equiv_{\mathsf{out}}
    \Eval_{D,\rho}(\mathsf{Fetch}(n,q_2)).
  \end{aligned}
\]
It follows because $\mathsf{take}_n$ preserves position-wise list equivalence
and $\mathsf{Fetch}$ propagates child errors unchanged.  FormalSQL mechanizes
the analogous congruence lemmas for query, scalar, and predicate contexts.

\subsection{Formalizing the Running Example}
\label{sec:running-example-proof}

We now instantiate the preceding results for the rewrite in
Figure~\ref{fig:running-example}.  Fix a schema $\Sigma$ containing
\textit{users} and \textit{group\_users}.  Let $L_U$ be the complete typed
projection list for \textit{users}.  The following well-typed core predicates
are obtained by lowering the corresponding SQL conditions:
\begin{itemize}
  \item $P_A \mathrel{:=} \texttt{users.active IS TRUE}$;
  \item $P_G \mathrel{:=} \texttt{group\_users.group\_id = 8861}$; and
  \item $P_J \mathrel{:=} \texttt{users.id = group\_users.user\_id}$.
\end{itemize}
The two query cores and their enclosing limits are
\begin{equation}
  \begin{aligned}
  U&=\mathsf{Filter}(P_A,\mathsf{Table}(\mathit{users})),\\
  M&=\mathsf{Filter}(P_G,\mathsf{Table}(\mathit{group\_users})),\\
  I&=\mathsf{Project}([\mathit{user\_id}],M),\\
  C_{\rm src}&=\mathsf{Filter}(\mathsf{In}((\mathit{id}),I),U),\\
  C_{\rm tgt}&=\mathsf{Project}
    (L_U,\mathsf{Filter}(P_J,\mathsf{CrossJoin}(U,M))),\\
  S&=\mathsf{Fetch}(20,C_{\rm src}),\qquad
  T=\mathsf{Fetch}(20,C_{\rm tgt}).
  \end{aligned}
  \label{eq:running-core-terms}
\end{equation}
Here \texttt{LIMIT 20} lowers to $\mathsf{Fetch}(20,\cdot)$ without an
$\mathsf{OrderBy}$, and hence consumes an unconstrained child order.  Assume
that $\Gamma$ contains the benchmark's NOT-NULL declarations, the primary key
on \textit{users.id}, and a unique key on the
\textit{group\_users} attribute pair
$(\mathit{group\_id},\mathit{user\_id})$.

Fix $D\models(\Sigma,\Gamma)$ and a binding list $\rho$, and write
$\Omega=(D,\rho)$.  For each of
$U,M,C_{\rm src}$, and $C_{\rm tgt}$, all successful observations differ only
in order; write $B_U,B_M,B_{\rm src}$, and $B_{\rm tgt}$ for their respective
underlying result bags.  For a user row $u$, define
the number of matching membership occurrences by
\begin{equation}
  N(u)=\sum_{m\in\operatorname{supp}(B_M)}
    \#_{B_M}(m)\cdot
    \mathbf 1\!\left[
      \Omega[u\mathbin{\|}m]\vdash
      P_J\Downarrow\mathsf{ok}(\Tval)\right].
  \label{eq:running-match-count}
\end{equation}
The indicator is one exactly when $P_J$ evaluates to $\Tval$ under the joined
row binding for $u$ and $m$, and zero otherwise.
The source retains each occurrence of $u$ iff $N(u)>0$, whereas the target
produces one projected occurrence for every match.  Their multiplicities are
therefore
\begin{equation}
  \#_{B_{\rm src}}(u)=\#_{B_U}(u)\mathbf 1[N(u)>0],
  \qquad
  \#_{B_{\rm tgt}}(u)=\#_{B_U}(u)N(u).
  \label{eq:running-core-multiplicities}
\end{equation}
Since $M$ restricts $\mathit{group\_id}$ to 8861, the unique key gives
$N(u)\in\{0,1\}$.  Equation~\eqref{eq:running-core-multiplicities} thus
establishes equal multiplicities for every user row.  Moreover, all scalar
operations in $P_A$, $P_G$, and $P_J$ are total on their well-typed row
bindings.  Uniformly in $D$ and $\rho$, we obtain
\begin{equation}
  \begin{aligned}
  \mathsf{Bags}(\mathsf{Succ}(\Eval_{D,\rho}(C_{\rm src})))
    &=\mathsf{Bags}(\mathsf{Succ}(\Eval_{D,\rho}(C_{\rm tgt}))),\\[-2pt]
  \mathsf{Errs}(\Eval_{D,\rho}(C_{\rm src}))
    &=\mathsf{Errs}(\Eval_{D,\rho}(C_{\rm tgt}))=\varnothing.
  \end{aligned}
  \label{eq:running-core-local-obligations}
\end{equation}

The successful observations of both cores are bag closed.  For
$C_{\rm src}$, the table scan admits every input permutation, the filters act
pointwise, and $\mathsf{In}$ depends on membership in $I$ rather than the order
of its rows.  For $C_{\rm tgt}$, $\mathsf{CrossJoin}$ resets order, after which
pointwise filtering and projection commute with permutations.  The closure
lemmas of Section~\ref{sec:closure} therefore give
\begin{equation}
  \mathsf{Closed}\bigl(\mathsf{Succ}(\Eval_{D,\rho}(C_{\rm src}))\bigr),
  \qquad
  \mathsf{Closed}\bigl(\mathsf{Succ}(\Eval_{D,\rho}(C_{\rm tgt}))\bigr).
  \label{eq:running-core-closure}
\end{equation}
The outcome relations are inhabited, so
Theorem~\ref{thm:outcome-bag-bridge}, together with
Equations~\eqref{eq:running-core-local-obligations} and
\eqref{eq:running-core-closure}, yields
\begin{equation}
  \Eval_{D,\rho}(C_{\rm src})
  \equiv_{\mathsf{out}}
  \Eval_{D,\rho}(C_{\rm tgt}).
  \label{eq:running-core-outcome-equivalence}
\end{equation}
Applying this $\mathsf{Fetch}$ property to
Equation~\eqref{eq:running-core-outcome-equivalence} gives outcome equivalence
for $S$ and $T$.  On the 21-user instance from
Section~\ref{sec:overview}, this gives the same 21 possible 20-row bags on both
sides.

\input{tables/table1-benchmark-characteristics}

\begin{theorem}[Rewrite equivalence]
\label{thm:running-example-equivalence}
Under the schema contract $(\Sigma,\Gamma)$ above,
\begin{equation}
  (\Sigma,\Gamma)\models S\equiv T.
  \label{eq:running-example-equivalence}
\end{equation}
\end{theorem}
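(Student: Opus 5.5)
The plan is to unfold Definition~\ref{def:query-equivalence} and reduce the theorem to the pointwise facts already assembled in Section~\ref{sec:running-example-proof}. Fix an arbitrary $D\models(\Sigma,\Gamma)$ of any finite cardinality and instantiate $\rho=\epsilon$. The argument of that section was carried out uniformly in $D$ and $\rho$, so it suffices to discharge its premises for this particular $D$. Those premises are the local obligations~\eqref{eq:running-core-local-obligations}, the closure facts~\eqref{eq:running-core-closure}, and inhabitedness of $\Eval_{D,\epsilon}(C_{\rm src})$ and $\Eval_{D,\epsilon}(C_{\rm tgt})$. We then invoke Theorem~\ref{thm:outcome-bag-bridge} to obtain~\eqref{eq:running-core-outcome-equivalence}, and finally apply the $\mathsf{Fetch}$ congruence property to conclude $\Eval_{D,\epsilon}(S)\equiv_{\mathsf{out}}\Eval_{D,\epsilon}(T)$.

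The steps are as follows.
\begin{enumerate}
\item \textbf{Inhabitedness.} Every operator in the cores has a successful derivation whenever its child does and its row-level predicates evaluate without error. The table scan admits any list representing $D(R)$. For the two filters, the cross join and the projection $L_U$, the predicates $P_A,P_G,P_J$ are total on well-typed bindings. The same totality makes the error sets empty.
\item \textbf{Bag equality.} Compute $B_{\rm src}$ and $B_{\rm tgt}$ via~\eqref{eq:running-core-multiplicities}. The key semantic point is the evaluation of $\mathsf{In}((\mathit{id}),I)$ under SQL three-valued logic: it yields $\Tval$ iff some row of $I$ equals $u[\mathit{id}]$, and it never yields $\Uval$, because the NOT-NULL declarations rule out \textsc{null} on both $\mathit{id}$ and $\mathit{user\_id}$. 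Hence the source keeps $u$ exactly when $N(u)>0$.
\item \textbf{Uniqueness gives $N(u)\in\{0,1\}$.} Every row of $M$ has $\mathit{group\_id}=8861$. By the unique key on $(\mathit{group\_id},\mathit{user\_id})$, at most one row with $\mathit{user\_id}=u[\mathit{id}]$ exists, counted with multiplicity. The semantics of the unique constraint must therefore be read as bounding multiplicity in the bag $D(\mathit{group\_users})$, and not merely support.
\item \textbf{Projection.} Check that $L_U$ applied to $u\mathbin{\|}m$ returns exactly $u$.
\end{enumerate}

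For closure, note that neither core has an outer constructor in $\mathcal R$: $C_{\rm src}$ is a $\mathsf{Filter}$ and $C_{\rm tgt}$ is a $\mathsf{Project}$. So Theorem~\ref{thm:certificate-soundness} does not apply directly. I would prove an auxiliary lemma instead. Suppose $\mathsf{Closed}(\mathsf{Succ}(\Eval(q)))$ holds, and the row-level expression is total and does not depend on the list position or on neighbouring rows. Then $\mathsf{Filter}(P,q)$ and $\mathsf{Project}(L,q)$ are again closed. The filter case is trickier, and I would argue it as follows. Take a required list $\ell_{\rm req}$ whose bag is $\rowsbag(\mathsf{filter}(\ell))$ for some child outcome $\ell$. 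Build a permutation $\ell'$ of $\ell$ that places the kept rows in the order of $\ell_{\rm req}$ and interleaves the discarded rows arbitrarily. Child closure admits $\ell'$, and filtering $\ell'$ yields a list $\approx_{\mathsf L}$ to $\ell_{\rm req}$. The projection case is analogous: permute the child list according to the preimage of each output row.

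The base cases are $\mathsf{Table}$ for $U$ and $\mathsf{CrossJoin}$ for $C_{\rm tgt}$, both of which are in $\mathcal R$. For $C_{\rm src}$ we additionally need that the $\mathsf{In}$ predicate evaluates identically for every legal outcome of the subquery $I$, since membership is order-invariant. This holds because every outcome of $I$ has the same bag.

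I expect the main obstacle to be the filter/project closure lemma. Two features of the semantics complicate it. First, equivalence is taken modulo $\approx_\sigma$ rather than syntactic equality. Second, the correlated $\mathsf{In}$ subquery is itself nondeterministic, so its truth value must be shown independent of which outcome of $I$ is chosen. I would try to handle both by proving a general permutation-invariance lemma first: for predicates whose subqueries have a single bag projection, the predicate's outcome is a function of the row alone. Once closure is established, the remaining steps are routine unfolding.
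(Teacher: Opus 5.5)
Your proposal is correct and follows the paper's route. You fix a conforming $D$ and take $\rho=\epsilon$. You then obtain core outcome equivalence from Theorem~\ref{thm:outcome-bag-bridge} using bag equality (with $N(u)\in\{0,1\}$ from the unique key), empty error sets, inhabitedness and closure, push it through $\mathsf{Fetch}$ congruence, and unfold Definition~\ref{def:query-equivalence}. Your closure-preservation lemma for $\mathsf{Filter}$ and $\mathsf{Project}$, together with the order-invariance of the $\mathsf{In}$ subquery, is a more careful version of the paper's informal remark that pointwise filtering and projection commute with permutations; as you observe, the syntactic certificate does not itself cover these outer constructors.
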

\begin{proof}
Fix $D\models(\Sigma,\Gamma)$.  Instantiating
Equation~\eqref{eq:running-core-outcome-equivalence} at $\rho=\epsilon$ and
applying the $\mathsf{Fetch}$ property above gives
\[
  \Eval_{D,\epsilon}(S)
  \equiv_{\mathsf{out}}
  \Eval_{D,\epsilon}(T).
\]
The claim follows from Definition~\ref{def:query-equivalence}, since $D$ was
arbitrary and $S,T$ have the same output signature.
\end{proof}

%% file: figures/core-syntax.tex
\begin{figure*}[!t]
  \centering
  \begingroup
  \footnotesize
  \setlength{\arraycolsep}{2.2pt}
  \renewcommand{\arraystretch}{1.12}
  \resizebox{\textwidth}{!}{$
  \begin{array}{@{}c@{\qquad\qquad}c@{}}
    \begin{array}[t]{c}
      \text{\normalfont\scshape Query Expressions} \\[2pt]
      \begin{array}{r@{\;}c@{\;}l@{\quad}l}
      Q & ::= & \mathsf{Empty}(\sigma)
          & \textit{No rows of signature $\sigma$} \\
        & \mid & \mathsf{Singleton}
          & \textit{One empty row of signature $()$} \\
        & \mid & \mathsf{Table}(R)
          & \textit{Base relation $R$} \\
        & \mid & \mathsf{Set}(\mathit{OP}_{\mathrm{set}},Q_1,Q_2)
          & \textit{Order-insensitive bag operation} \\
        & \mid & \mathsf{CrossJoin}(Q_1,Q_2)
          & \textit{Cartesian product} \\
        & \mid & \mathsf{Join}(\mathit{OP}_{\mathrm{join}},P,Q_1,Q_2)
          & \textit{Outer/semi/anti join on $P$} \\
        & \mid & \mathsf{Project}(L,Q)
          & \textit{Order-preserving projection} \\
        & \mid & \mathsf{Filter}(P,Q)
          & \textit{Three-valued selection} \\
        & \mid & \mathsf{Distinct}(Q)
          & \textit{Duplicate elimination} \\
        & \mid & \mathsf{Group}(L,G,P,Q)
          & \textit{Grouped aggregation with \textsc{having}} \\
        & \mid & \mathsf{GroupingSets}(\mathcal G,Q)
          & \textit{Shared-input grouping-set branches} \\
        & \mid & \mathsf{Rank}(K_p,K,
          a_{\rm rank}\!:\!\texttt{BIGINT},Q)
          & \textit{\textsc{rank} window} \\
        & \mid & \mathsf{Window}(K_p,K,W,Q)
          & \textit{Shared window evaluation} \\
        & \mid & \mathsf{OrderBy}(K,Q)
          & \textit{Constrain output order} \\
        & \mid & \mathsf{Offset}(n,Q)
          & \textit{Drop first $n$ rows} \\
        & \mid & \mathsf{Fetch}(n,Q)
          & \textit{Keep first $n$ rows} \\[5pt]
      \sigma & ::= & (a_1\!:\!\tau_1,\ldots,a_m\!:\!\tau_m)
          & \textit{Ordered relation signature} \\
      L & ::= & [(V_1,a_1\!:\!\tau_1),\ldots,(V_k,a_k\!:\!\tau_k)]
          & \textit{Projection list} \\
      G & ::= & [V_1,\ldots,V_g]
          & \textit{Grouping expressions} \\
      \mathcal G & ::= & [(L_1,G_1),\ldots,(L_s,G_s)]
          & \textit{Nonempty grouping-set branch list} \\
      K_p,K & ::= & [\kappa_1,\ldots,\kappa_r]
          & \textit{Partition/order specification} \\
      \kappa & ::= & (a\!:\!\tau,\delta,\nu)
          & \textit{Typed sort key} \\
      \delta & ::= & \mathsf{Asc}\mid\mathsf{Desc}
          & \textit{Sort direction} \\
      \nu & ::= & \mathsf{First}\mid\mathsf{Last}
          & \textit{NULL placement} \\
      \mathit{OP}_{\mathrm{set}} & ::= &
          \mathsf{Union}\mid\mathsf{Intersect}\mid\mathsf{Except}
          & \textit{ALL-mode bag-operation kind} \\
      \mathit{OP}_{\mathrm{join}} & ::= &
          \mathsf{Left}\mid\mathsf{Right}\mid\mathsf{Full}
          \mid\mathsf{Semi}\mid\mathsf{Anti}
          & \textit{Join kind}
      \end{array}
    \end{array}
    &
    \begin{array}[t]{c}
      \text{\normalfont\scshape Scalar Expressions} \\[2pt]
      \begin{array}{r@{\;}c@{\;}l@{\quad}l}
      V & ::= & c:\tau\mid a:\tau
          & \textit{Typed literal or attribute} \\
        & \mid & \alpha_\omega(V):\tau
          & \textit{Aggregate application} \\
        & \mid & f(V_1,\ldots,V_r):\tau
          & \textit{Scalar application} \\
        & \mid & \mathsf{Case}(P,V_1,V_2):\tau
          & \textit{Lazy \textsc{case}} \\
        & \mid & \mathsf{Subquery}(Q):\tau
          & \textit{Scalar subquery} \\[6pt]
      \alpha & ::= & \mathsf{Count}\mid\mathsf{Sum}\mid\mathsf{Avg}
          \mid\mathsf{Min}\mid\mathsf{Max}\mid\cdots
          & \textit{Aggregate operator} \\
      \omega & ::= & \mathsf{All}\mid\mathsf{Distinct}
          & \textit{Aggregate duplicate policy} \\
      f & ::= & +\mid-\mid\times\mid\div\mid
          \mathsf{Cast}\mid\mathsf{Concat}\mid\cdots
          & \textit{Scalar operator} \\[6pt]
      \multicolumn{4}{c}{\text{\normalfont\scshape Predicates}}
          \\[2pt]
      P & ::= & \mathsf{True}
          & \textit{Truth constant} \\
        & \mid & \theta(V_1,\ldots,V_r)
          & \textit{Scalar predicate} \\
        & \mid & \mathsf{Not}(P)
          & \textit{Three-valued negation} \\
        & \mid & \mathsf{And}([P_1,\ldots,P_h])
          & \textit{SQL conjunction} \\
        & \mid & \mathsf{Or}([P_1,\ldots,P_h])
          & \textit{SQL disjunction} \\
        & \mid & \mathsf{In}((V_1,\ldots,V_r),Q)
          & \textit{Tuple membership} \\
        & \mid & \mathsf{Exists}(Q)
          & \textit{Query nonemptiness} \\[6pt]
      \multicolumn{4}{c}{\text{\normalfont\scshape Integrity Constraints}}
          \\[2pt]
      C & ::= & \mathsf{NotNull}(R,[a_1,\ldots,a_h])
          & \textit{Nullability constraint} \\
        & \mid & \mathsf{PrimaryKey}(R,[a_1,\ldots,a_h])
          & \textit{Primary key} \\
        & \mid & \mathsf{Unique}(R,[a_1,\ldots,a_h])
          & \textit{Unique key} \\
        & \mid & \mathsf{ForeignKey}\bigl(R,[a_1,\ldots,a_h],
          & \\
        &      & \qquad R',[b_1,\ldots,b_h]\bigr)
          & \textit{Foreign key} \\
        & \mid & \mathsf{Check}(R,\varphi_{\rm row})
          & \textit{Row-local check} \\
        & \mid & \mathsf{UniqueIndex}\bigl(R,[V_1,\ldots,V_u],
          \varphi_{\rm row}\bigr)
          & \textit{Expression/partial unique index}
      \end{array}
    \end{array}
  \end{array}
  $}
  \endgroup
  \caption{Typed core syntax for queries and integrity constraints.}
  \Description{Query, scalar, predicate, and constraint grammar for the typed
  core language.}
  \label{fig:sql-admission-core}
\end{figure*}

%% file: figures/core-semantics.tex
\begin{figure*}[!t]
  \centering
  \footnotesize
  \setlength{\abovedisplayskip}{5pt}
  \setlength{\belowdisplayskip}{5pt}
  \setlength{\abovedisplayshortskip}{5pt}
  \setlength{\belowdisplayshortskip}{5pt}
  \begin{minipage}[t]{0.485\textwidth}
  \centering
  \[
    \Omega\vdash\mathsf{Empty}(\sigma)\Downarrow\mathsf{ok}([\,])
    \quad(\textsc{Empty})
  \]
  \[
    \Omega\vdash\mathsf{Singleton}\Downarrow
      \mathsf{ok}([\langle\rangle])
    \quad(\textsc{Singleton})
  \]
  \[
    \frac{\mathsf{Rows}(D(R),\ell)}
         {\Omega\vdash\mathsf{Table}(R)\Downarrow\mathsf{ok}(\ell)}
    \quad(\textsc{Table})
  \]
  \[
    \frac{
      \begin{array}{c}
      \Omega\vdash q_1\Downarrow\mathsf{ok}(\ell_1)\qquad
      \Omega\vdash q_2\Downarrow\mathsf{ok}(\ell_2)\\
      \mathsf{Rows}(\mathsf{set}_{\mathit{op}}^{\mathsf B}
        (\rowsbag(\ell_1),\rowsbag(\ell_2)),\ell)
      \end{array}}
      {\Omega\vdash\mathsf{Set}(\mathit{op},q_1,q_2)
        \Downarrow\mathsf{ok}(\ell)}
    \quad(\textsc{Set})
  \]
  \end{minipage}\hfill
  \begin{minipage}[t]{0.485\textwidth}
  \centering
  \[
    \frac{
      \begin{array}{c}
      \Omega\vdash q\Downarrow\mathsf{ok}(\ell_0)\\
      \mathsf{Rows}(\mathsf{distinct}^{\mathsf B}
        (\rowsbag(\ell_0)),\ell)
      \end{array}}
      {\Omega\vdash\mathsf{Distinct}(q)\Downarrow\mathsf{ok}(\ell)}
    \quad(\textsc{Distinct})
  \]
  \[
    \frac{
      \begin{array}{c}
      \Omega\vdash q_1\Downarrow\mathsf{ok}(\ell_1)\qquad
      \Omega\vdash q_2\Downarrow\mathsf{ok}(\ell_2)\\
      \Omega\vdash\mathsf{join}^{\mathsf B}_{j,P}
        (\rowsbag(\ell_1),\rowsbag(\ell_2))\Downarrow o
      \end{array}}
      {\Omega\vdash\mathsf{Join}(j,P,q_1,q_2)\Downarrow o}
    \quad(\textsc{Join})
  \]
  \[
    \frac{
      \begin{array}{c}
      \Omega\vdash q_1\Downarrow\mathsf{ok}(\ell_1)\qquad
      \Omega\vdash q_2\Downarrow\mathsf{ok}(\ell_2)\\
      \mathsf{Rows}(\rowsbag(\ell_1)\otimes\rowsbag(\ell_2),\ell)
      \end{array}}
      {\Omega\vdash\mathsf{CrossJoin}(q_1,q_2)
        \Downarrow\mathsf{ok}(\ell)}
    \quad(\textsc{Cross})
  \]
  \end{minipage}
  \par\vspace{2pt}
  \begin{minipage}[t]{0.485\textwidth}
  \centering
  \[
    \frac{\Omega\vdash q\Downarrow\mathsf{ok}(\ell)\qquad
          \Omega\vdash\mathsf{project}_L(\ell)\Downarrow o}
         {\Omega\vdash\mathsf{Project}(L,q)\Downarrow o}
    \quad(\textsc{Project})
  \]
  \[
    \frac{\strut}
         {\Omega\vdash\mathsf{project}_L([\,])
           \Downarrow\mathsf{ok}([\,])}
    \quad(\textsc{Proj-Nil})
  \]
  \[
    \frac{\Omega[r]\vdash L\Downarrow\mathsf{err}(e)}
         {\Omega\vdash\mathsf{project}_L(r::\ell)
           \Downarrow\mathsf{err}(e)}
    \quad(\textsc{Proj-Err})
  \]
  \[
    \frac{\Omega[r]\vdash L\Downarrow\mathsf{ok}(r')\qquad
          \Omega\vdash\mathsf{project}_L(\ell)\Downarrow o}
         {\Omega\vdash\mathsf{project}_L(r::\ell)
           \Downarrow\mathsf{cons}(r',o)}
    \quad(\textsc{Proj-Cons})
  \]
  \end{minipage}\hfill
  \begin{minipage}[t]{0.485\textwidth}
  \centering
  \[
    \frac{\Omega\vdash q\Downarrow\mathsf{ok}(\ell)\qquad
          \Omega\vdash\mathsf{filter}_P(\ell)\Downarrow o}
         {\Omega\vdash\mathsf{Filter}(P,q)\Downarrow o}
    \quad(\textsc{Filter})
  \]
  \[
    \frac{\strut}
         {\Omega\vdash\mathsf{filter}_P([\,])
           \Downarrow\mathsf{ok}([\,])}
    \quad(\textsc{Filter-Nil})
  \]
  \[
    \frac{\Omega[r]\vdash P\Downarrow\mathsf{err}(e)}
         {\Omega\vdash\mathsf{filter}_P(r::\ell)
           \Downarrow\mathsf{err}(e)}
    \quad(\textsc{Filter-Err})
  \]
  \[
    \frac{\Omega[r]\vdash P\Downarrow\mathsf{ok}(b)\qquad
          \Omega\vdash\mathsf{filter}_P(\ell)\Downarrow o}
         {\Omega\vdash\mathsf{filter}_P(r::\ell)
           \Downarrow\mathsf{keep}(b,r,o)}
    \quad(\textsc{Filter-Cons})
  \]
  \end{minipage}
  \par\vspace{2pt}
  \begin{minipage}[t]{0.485\textwidth}
  \centering
  \[
    \frac{
      \begin{array}{c}
      \Omega\vdash q\Downarrow\mathsf{ok}(\ell_0)\qquad
      \mathsf{Groups}_{\Omega,G}(\rowsbag(\ell_0),\Pi)\\
      \Omega\vdash\mathsf{groups}_{L,G,P}(\Pi)
        \Downarrow\mathsf{ok}(\ell_g)\\
      \mathsf{Rows}(\rowsbag(\ell_g),\ell)
      \end{array}}
      {\Omega\vdash\mathsf{Group}(L,G,P,q)\Downarrow\mathsf{ok}(\ell)}
    \quad(\textsc{Group})
  \]
  \[
    \frac{\Omega\vdash q\Downarrow\mathsf{err}(e)}
         {\Omega\vdash U(q)\Downarrow\mathsf{err}(e)}
    \quad(\textsc{Unary-Err})
  \]
  \[
    \frac{\Omega\vdash q_1\Downarrow\mathsf{err}(e)}
         {\Omega\vdash B(q_1,q_2)\Downarrow\mathsf{err}(e)}
    \quad(\textsc{Left-Err})
  \]
  \end{minipage}\hfill
  \begin{minipage}[t]{0.485\textwidth}
  \centering
  \[
    \frac{
      \Omega\vdash q\Downarrow\mathsf{ok}(\ell_0)\qquad
      \mathsf{Rows}(\rowsbag(\ell_0),\ell)\qquad
      \mathsf{sorted}_K(\ell)}
      {\Omega\vdash\mathsf{OrderBy}(K,q)\Downarrow\mathsf{ok}(\ell)}
    \quad(\textsc{Order})
  \]
  \[
    \frac{\Omega\vdash q\Downarrow\mathsf{ok}(\ell)}
         {\Omega\vdash\mathsf{Offset}(n,q)
           \Downarrow\mathsf{ok}(\mathsf{drop}_n(\ell))}
    \quad(\textsc{Offset})
  \]
  \[
    \frac{\Omega\vdash q\Downarrow\mathsf{ok}(\ell)}
         {\Omega\vdash\mathsf{Fetch}(n,q)
           \Downarrow\mathsf{ok}(\mathsf{take}_n(\ell))}
    \quad(\textsc{Fetch})
  \]
  \vspace{-6pt}
  \[
    \frac{\Omega\vdash q_1\Downarrow\mathsf{ok}(\ell_1)\qquad
          \Omega\vdash q_2\Downarrow\mathsf{err}(e)}
         {\Omega\vdash B(q_1,q_2)\Downarrow\mathsf{err}(e)}
    \quad(\textsc{Right-Err})
  \]
  \end{minipage}
  \caption{Big-step ordered-list semantic rules for part of the core query
  language.
  Here $U$ and $B$ range over well-typed unary and left-to-right binary query
  constructors.  All metavariables are implicitly quantified over well-typed
  objects; in \textsc{Group}, $\Pi$ ranges over lists of row groups and
  $\ell_g$ over lists of rows produced from those groups.  The auxiliary
  judgments are explained in Section~\ref{sec:observations}.}
  \Description{Named inference rules define program execution under SQL
  program state Omega for constants, tables, relational operators, grouping,
  ordering, positional consumers, and propagation of child errors.}
  \label{fig:core-query-semantics}
\end{figure*}%

%% file: tables/table1-benchmark-characteristics.tex
\input{tables/table1-benchmark-statistics}
\begin{table*}[!t]
  \caption{Composition and input-query characteristics of the 389 benchmark pairs.}
  \label{tab:benchmark-characteristics}
  \centering
  \footnotesize
  \setlength{\tabcolsep}{2.6pt}
  \renewcommand{\arraystretch}{1.12}
  \begin{tabular*}{\textwidth}{@{\extracolsep{\fill}}lrcccccccccc@{}}
    \toprule
    \multicolumn{3}{c}{Corpus}
      & \multicolumn{2}{c}{Input-query structure (P50/P90)}
      & \multicolumn{7}{c}{Pairs containing feature (\%)} \\
    \cmidrule(lr){1-3}\cmidrule(lr){4-5}\cmidrule(lr){6-12}
    \shortstack[l]{Benchmark\\family}
      & Pairs
      & \shortstack{Schemas\\(tables)}
      & \shortstack{Query\\operators}
      & \shortstack{Join\\occurrences}
      & \shortstack{Agg./\\dist.}
      & \shortstack{Subq.\\(corr.)}
      & \shortstack{Set\\ops}
      & \shortstack{Outer\\join}
      & \shortstack{Order/\\slice}
      & Top-$k$
      & Window \\
    \midrule
    \tableonebenchmarkrows
    \bottomrule
  \end{tabular*}

  \vspace{2pt}
  \begin{minipage}{0.99\textwidth}
    \scriptsize
    \emph{Schemas (tables)} reports distinct schemas and the range of table
    counts.  The \emph{query operators} metric counts projections, scans,
    joins, filters, aggregation or duplicate elimination, set operations,
    ordering/slicing, windows, and \texttt{VALUES}.  Nested queries are
    included, with each CTE definition counted once.  The \emph{join
    occurrences} metric counts explicit joins and comma-separated
    \texttt{FROM} items.  Pair-level P50/P90 use the larger query.  Feature
    percentages count a pair when either query contains the feature; shading
    shows prevalence, and columns may overlap.  \emph{Agg./dist.} covers
    aggregation, grouping, and duplicate elimination; parentheses under
    \emph{Subq.} report correlated subqueries.  Top-$k$ counts
    \texttt{LIMIT}/\texttt{FETCH}, but not \texttt{OFFSET} alone.
  \end{minipage}
\end{table*}

%% file: tables/table1-benchmark-statistics.tex
\newcommand{\tableonebenchmarkrows}{%
Literature~\cite{he2024verieql} & 30 & 24 (1--4) & 6/9 & 1/2 & \benchheat{11}{50\%} & \benchheat{3}{13.3\% (0\%)} & \benchheat{3}{13.3\%} & \benchheat{0}{0\%} & \benchheat{0}{0\%} & \benchheat{0}{0\%} & \benchheat{0}{0\%} \\
Calcite~\cite{begoli2018calcite,he2024verieql} & 236 & 1 (6) & 6/10 & 1/2 & \benchheat{9}{42.4\%} & \benchheat{2}{11\% (2.5\%)} & \benchheat{2}{10.2\%} & \benchheat{6}{26.7\%} & \benchheat{2}{7.6\%} & \benchheat{1}{3\%} & \benchheat{0}{0.8\%} \\
R-Bot TPC-H~\cite{sun2025rbot,tpc2017tpch} & 22 & 1 (8) & 12/18 & 1/5 & \benchheat{22}{100\%} & \benchheat{10}{45.5\% (27.3\%)} & \benchheat{0}{0\%} & \benchheat{1}{4.5\%} & \benchheat{22}{100\%} & \benchheat{22}{100\%} & \benchheat{0}{0\%} \\
R-Bot DSB~\cite{sun2025rbot,ding2021dsb} & 37 & 1 (25) & 22/50 & 6/11 & \benchheat{21}{97.3\%} & \benchheat{8}{35.1\% (21.6\%)} & \benchheat{4}{18.9\%} & \benchheat{2}{10.8\%} & \benchheat{21}{94.6\%} & \benchheat{17}{75.7\%} & \benchheat{0}{0\%} \\
TPC-DS variants~\cite{tpc2024tpcds} & 14 & 1 (25) & 34.5/72 & 5/15 & \benchheat{22}{100\%} & \benchheat{6}{28.6\% (14.3\%)} & \benchheat{20}{92.9\%} & \benchheat{6}{28.6\%} & \benchheat{22}{100\%} & \benchheat{22}{100\%} & \benchheat{8}{35.7\%} \\
WeTune~\cite{wang2022wetune} & 50 & 7 (22--332) & 9.5/17 & 1/3 & \benchheat{10}{44\%} & \benchheat{16}{72\% (26\%)} & \benchheat{6}{26\%} & \benchheat{6}{28\%} & \benchheat{7}{30\%} & \benchheat{4}{20\%} & \benchheat{0}{0\%} \\
}

%% file: sections/implementation.tex
\section{Implementation}
\label{sec:implementation}

\logos\footnote{\url{https://github.com/WindOctober/Logos}} comprises a typed
SQL frontend, the FormalSQL Rocq library, and an
untrusted proof-search loop.  SQLGlot translates the declared input dialect to
Calcite-compatible SQL~\cite{sqlglot2026}; Calcite resolves types and operators
~\cite{begoli2018calcite}; and the Rust frontend lowers Calcite's intermediate
representation to the Rocq representation of the typed core shown in
Figure~\ref{fig:sql-admission-core}.

In Rocq, we refactor and extend SQLCoq~\cite{benzaken2019coqsql} into
FormalSQL, an order-sensitive semantic library.  FormalSQL mechanizes the
complete execution and error semantics underlying
Figure~\ref{fig:core-query-semantics} and provides the bridge between bag and
ordered-list equivalence developed in Section~\ref{sec:bag-abstraction}.  Its
library contains over 2,000 verified lemmas.  They primarily support three
parts of verification: compositional query rewrites, such as merging adjacent
projections; reasoning about scalar and grouped expressions under schema
constraints, such as proving two scalar expressions equivalent in their
respective typed theories; and semantic lifting, such as transporting local
equivalence proofs across query constructors and enclosing contexts while
preserving errors and observable order.

For each input SQL query pair, \logos generates a Rocq verification task that
formalizes its schema constraints, queries, and equivalence statement under
FormalSQL semantics.  It then invokes a proof agent, which determines whether
to prove equivalence or establish non-equivalence.  To focus the agent on proof
construction rather than repository navigation, \logos provides a compact
FormalSQL semantic context together with semantic indexes for the lemma
library.

Non-equivalence cannot in general be certified merely by generating a database
instance and executing it in a deterministic engine such as
PostgreSQL~\cite{postgresql18select}.  FormalSQL may assign a query a set of
possible ordered-list results, whereas one concrete execution observes only a
single result.  \logos therefore accepts non-equivalence only through a
countermodel checked against the complete FormalSQL semantics or an observable
mismatch between the queries' output signatures.

%% file: sections/evaluation.tex
\section{Evaluation}
\label{sec:evaluation}

We evaluate \logos on 389 query pairs from six benchmark families and compare
it with three unbounded SQL verifiers.  Our evaluation addresses two questions:
\begin{itemize}
  \item \textbf{RQ1 (Effectiveness):} How does the verification
  coverage of \logos compare with existing verifiers?
  \item \textbf{RQ2 (Efficiency):} What is the runtime of
  \logos, and where are its main performance bottlenecks?
\end{itemize}

\subsection{Experimental Setup}

\paragraph{Benchmarks.}
Our corpus combines six complementary benchmark families.  The
\emph{Literature} family contains 30 standard-SQL query pairs from
VeriEQL's Literature benchmark~\cite{he2024verieql}, while \emph{Calcite}
contains 236 pairs derived from Apache Calcite optimizer
tests~\cite{begoli2018calcite,he2024verieql}.  The
R-Bot TPC-H and R-Bot DSB families contain 22 and 37 decision-support query
pairs, respectively.  For these families, we generate each target
reproducibly by applying a single rewrite with a pinned version of R-Bot's
Calcite rewrite engine~\cite{sun2025rbot}.  They are therefore deterministic
rewrite workloads rather than unpublished LLM outputs from the R-Bot study.
The \emph{TPC-DS variants} family pairs 14 official base templates with their
corresponding variants.  Finally, \emph{WeTune} contributes 50 before/after
rewrites taken from commits in seven open-source applications~\cite{wang2022wetune}.

\input{tables/table2-cross-benchmark-results}

Table~\ref{tab:benchmark-characteristics} shows that corpus complexity is
multidimensional.  Literature and Calcite account for 266 of the 389 pairs,
but their individual queries are relatively small: their P90 sizes are 9 and
10 operators, respectively, with two join occurrences in both families.
WeTune shifts the stress to application context:
its seven schemas contain 22--332 tables, and 72\% of its pairs contain
subqueries.  R-Bot DSB and the TPC-DS variants instead supply substantially
larger analytical queries, with P90 operator counts of 50 and 72 and join counts
of 11 and 15.  TPC-DS also provides the broadest feature interactions: every
pair combines aggregation, ordering, and top-$k$, while 92.9\% contain set
operations and 35.7\% contain windows.  Thus, the corpus progresses from
localized optimizer rewrites to feature-rich analytical workloads and tests
order sensitivity in composition with other SQL operators.

\paragraph{Environment.}
We run all experiments on a Linux server with an AMD Ryzen 9 5950X CPU
(16 cores and 32 hardware threads) and 128~GB of memory.  Every tool receives
a 14,400-second wall-clock budget and a 16~GB memory limit per query pair,
and we report per-case end-to-end time rather than campaign makespan.  \logos
uses the \texttt{gpt-5.6-sol} proof agent at medium reasoning effort and
permits at most three counterexample rounds.  Each proof-agent sandbox is limited
to 6~GB of memory and 2~GB of writable storage;
proof search and trusted Rocq checking are capped at 14,100 and 420 seconds,
respectively.

\paragraph{Baselines.}
We compare \logos with three unbounded SQL equivalence verifiers using their
original implementations:
\begin{itemize}
  \item \textbf{Cosette} uses Rosette, a solver-aided language embedded in
  Racket~\cite{torlak2014rosette}, for bounded counterexample search, and Coq
  for checked proofs over K-relations and UniNomials~\cite{chu2017cosette}.

  \item \textbf{SQLSolver} reduces U-expressions with unbounded summations to
  an extension of linear integer arithmetic discharged by SMT
  solvers~\cite{ding2023sqlsolver}.

  \item \textbf{QED} normalizes queries into a sum-product normal form and
  proves equality through stabilization and recursive unification, using SMT
  solvers for first-order side conditions~\cite{wang2024qed}.
\end{itemize}

Cosette requires its dedicated DSL, whereas SQLSolver and QED accept
Calcite-compatible SQL subsets.  Since the benchmark families span several
source dialects, we build tool-specific materialization adapters on SQLGlot to
translate each case's declared dialect into the corresponding schema and query
pair.  This lets every verifier run through its intended frontend.

\newcommand{\rqanswer}[2]{%
  \par\addvspace{0.5\baselineskip}
  \begingroup
  \setlength{\fboxrule}{0.8pt}%
  \setlength{\fboxsep}{5pt}%
  \noindent\fcolorbox{rqframe}{rqfill}{%
    \parbox{\dimexpr\linewidth-2\fboxsep-2\fboxrule\relax}{%
      \textbf{Answer to #1.} #2}}%
  \endgroup
  \par\addvspace{0.5\baselineskip}}

\subsection{RQ1: Comparison with Existing Verifiers}

Table~\ref{tab:cross-benchmark-results} reports coverage by benchmark family.
Across the full corpus, \logos solves 338 pairs (\logosrate), compared with 249
(\bestbaselinerate) for SQLSolver, the strongest baseline---an improvement of
\logosimprovement percentage points.  \logos attains the highest solved count
on every family, tying SQLSolver only on R-Bot TPC-H.  The advantage grows on
the 104 order-sensitive pairs identified by Table~\ref{tab:benchmark-characteristics}:
\logos solves 76 (73.1\%), whereas SQLSolver solves 43 (41.3\%), QED 35
(33.7\%), and Cosette 5 (4.8\%).  Thus the aggregate improvement is not obtained
solely from the predominantly unordered Calcite family.

Examining the solved-case intersections, \logos solves 72 cases that no other
evaluated verifier solves, compared with 13 for SQLSolver, four for QED,
and none for Cosette.  The baseline portfolio collectively covers another six
cases through the overlap of SQLSolver and QED, so its total contribution beyond
\logos is 23 cases.

Verification coverage also depends on whether a tool can carry a case through
its input and semantic interfaces.  Table~\ref{tab:cross-benchmark-results}
uses \emph{Unsupported} for non-definite outcomes other than timeouts, including
syntax and lowering failures, model or prover limitations, and solver errors.
Only four \logos cases (1.0\%) fall into this category, compared with 118
(30.3\%) for SQLSolver, 176 (45.2\%) for QED, and 339 (87.1\%) for Cosette.
Thus, \logos carries 99.0\% of the corpus either to a definite result or into
proof search, even when that search eventually exhausts its budget.

The cross-tool intersections show that this broader support translates into
verified results.  Of the 118 pairs unsupported by SQLSolver, \logos solves 94
(79.7\%); another 20 reach proof search but time out.  Conversely, among the
108 pairs solved by \logos but not by SQLSolver, 94 (87.0\%) lie outside
SQLSolver's supported fragment, while only 14 are SQLSolver timeouts.  The
principal gain is therefore semantic reach rather than merely a stronger
decision procedure over the same fragment: a comparatively comprehensive
typed core, combined with agent-guided proof construction, makes
proof-producing verification effective on application-scale query pairs that
existing models cannot represent.

This support gap is most visible in the 101 feature-rich cases from R-Bot DSB,
TPC-DS variants, and WeTune: \logos solves 67, compared with 36 for QED, 27 for
SQLSolver, and none for Cosette.  QED is designed to establish equivalence rather than certify
inequivalence, and its bag-semantics model leaves ordering and concrete
aggregate behavior largely abstract.  SQLSolver's linear-arithmetic reduction
loses completeness on nonlinear expressions, supports only a restricted range
of aggregate reasoning, and handles ordering when the two queries have
compatible outer ordering and slicing structures rather than compositionally
nested top-$k$.
Cosette is further constrained by its DSL and integer-encoded,
unordered relational model.  By contrast, \logos covers these features within
one typed, order-sensitive semantics.  Its four unsupported cases arise from
isolated frontend and formalization gaps; we analyze the much larger set of
timeouts, together with the resulting optimization opportunities, in RQ2.

\begin{figure}[!t]
  \centering
  \includegraphics[width=\linewidth]{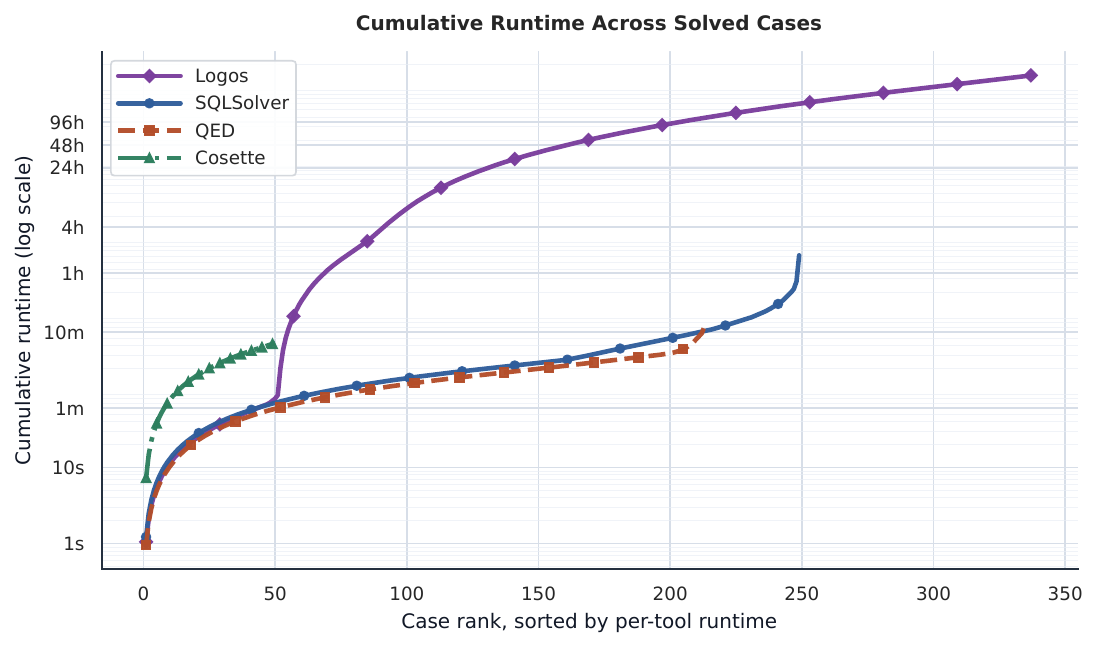}
  \caption{Cumulative runtime on solved cases, sorted by each tool's
  per-case wall-clock time.}
  \Description{A cumulative runtime plot comparing Logos, SQLSolver, QED, and
  Cosette. Logos solves the most cases but accumulates substantially more time.}
  \label{fig:cumulative-runtime}
\end{figure}

\rqanswer{RQ1}{\logos provides the greatest coverage overall and in every
benchmark family, solving \logosimprovement percentage points more pairs than
the strongest unbounded baseline while leaving only 1.0\% of the corpus
unsupported. Its advantage is particularly pronounced for order-sensitive
pairs, where it improves the solved rate from 41.3\% to 73.1\%.}

\subsection{RQ2: Runtime and Bottlenecks}

Figure~\ref{fig:cumulative-runtime} compares end-to-end wall-clock time only on
the cases each tool solves.
The baseline medians are 1.4 seconds for QED, 1.7 seconds for SQLSolver, and
8.5 seconds for Cosette.  \logos is substantially slower: its median is
3,632 seconds (60.5 minutes), and its 338 solved cases accumulate 396.3 hours,
compared with 1.70 hours for SQLSolver, 0.20 hours for QED, and 0.12 hours for
Cosette.  This gap is the tradeoff of \logos{}'s Rocq-plus-agent route: the same
general semantics that yields high solved and support rates also leaves proof
construction to a comparatively expensive synthesis loop.

We analyze all 332 cases that enter agent-guided verification, including both
completed proofs and timeouts.  Their proof-agent rounds consume 563.3 hours.
Of this time, 493.2 hours (87.6\%) occur outside diagnostic Rocq invocations and
cover model deliberation and proof orchestration; 70.1 hours (12.4\%) are spent
compiling candidate proofs.  The bottleneck is therefore proof synthesis rather
than Rocq compilation alone.

The 338 solved cases comprise 243 equivalence results and 95 non-equivalence
results.  Every equivalence result has a Rocq certificate; among these 243
certificate modules, 224 (92.2\%) directly invoke the \logos-specific lemma
library, while 58 (23.9\%) use a bag-to-list bridge.

Each diagnostic request submits the current candidate Rocq module to the
checker.  Across all 332 agent cases, 53,565 such submissions contain
20,834,738 lines of candidate Rocq source in total, with a mean of 62,755 lines
and a median of 32,002 lines per case.  Of these submissions, 50,841 do not
compile, consuming
67.9 of the 70.1 diagnostic hours.  The agent thus automates Rocq proof
development, but currently pays for that automation through many iterations
over incomplete proof states.  The lemma library and bag-to-list bridges
reduce this repeated work by packaging proof-term combinations that would
otherwise have to be reconstructed from the execution semantics.

This cost is not an intended property of the semantics.  Off-the-shelf SMT
theories do not natively provide the combination required here: a set of
alternative legal row sequences together with the type- and NULL-aware
semantics of SQL aggregates such as AVG, MIN, and MAX.  cvc5 provides distinct
finite-bag and sequence theories~\cite{barbosa2022cvc5}, but their primitives
define neither the correspondence between a bag and all row orders admitted by
SQL nor SQL-specific aggregate behavior.  An SMT encoding could represent
result sequences using arrays and explicit lengths, but equivalence proofs
would then require auxiliary invariants relating indices, legal orders, and
aggregate states, together with solver guidance to instantiate these
invariants across operators.  Rocq lets us define this combination directly
and kernel-check the result, while the measurements identify deterministic
proof templates, more selective lemma retrieval, and incremental diagnostics
as the main avenues for reducing synthesis cost.

\rqanswer{RQ2}{The existing unbounded verifiers are much faster on the cases
they solve. \logos trades that speed for broader, proof-producing coverage; its
principal bottleneck is LLM-guided proof construction: 87.6\% of agent-round
time lies outside Rocq diagnostic compilation.  Reusable lemmas and bag-to-list
bridges already reduce repeated reasoning, while deterministic fast paths,
better retrieval, and incremental diagnostics provide the main opportunities
for narrowing the remaining gap.}

\subsection{Threats to Validity}
\label{sec:threats-to-validity}

Our benchmark families span several SQL dialects.  Although the tool-specific
materialization adapters use SQLGlot to normalize each case, the transpiler
cannot guarantee semantic preservation across dialects.  A small number of
cases may therefore acquire semantic differences during translation, leading
to false positive or false negative benchmark outcomes.  We mitigate this risk
by designing \logos around a broad common SQL language and, where the SQL
standard leaves behavior unspecified, adopting PostgreSQL's documented
execution behavior.  This choice improves consistency across the corpus but
does not eliminate dialect-specific discrepancies.

Agent-guided proof search is also not fully deterministic.  Variation in model
serving, response latency, and proof-step selection can change both the route
taken through the proof space and the wall-clock time of an individual run.
Consequently, the reported runtime distribution and timeout boundary may vary
under repeated campaigns, even with fixed prompts and budgets.

%% file: tables/table2-cross-benchmark-results.tex
\begin{table*}[!t]
  \centering
  \caption{Comparison of automated SQL equivalence verifiers by benchmark family.}
  \label{tab:cross-benchmark-results}
  \scriptsize
  \setlength{\tabcolsep}{1.8pt}
  \renewcommand{\arraystretch}{1.14}
  \begin{tabular*}{\textwidth}{@{\extracolsep{\fill}}lr
      ccc ccc ccc ccc@{}}
    \toprule
    \resultHeader{Benchmark family} &
    \resultHeader{Pairs} &
    \multicolumn{3}{c}{\textbf{\logos}} &
    \multicolumn{3}{c}{SQLSolver} &
    \multicolumn{3}{c}{QED} &
    \multicolumn{3}{c@{}}{Cosette} \\
    \cmidrule(lr){3-5}\cmidrule(lr){6-8}
    \cmidrule(lr){9-11}\cmidrule(l){12-14}
      & & Solved & Timeout & Unsupported
      & Solved & Timeout & Unsupported
      & Solved & Timeout & Unsupported
      & Solved & Timeout & Unsupported \\
    \midrule
    Literature       & 30
      & \textbf{\resultCount{28}{93.3}} & \resultCount{2}{6.7} & \resultCount{0}{0.0}
      & \resultCount{23}{76.7} & \resultCount{0}{0.0} & \resultCount{7}{23.3}
      & \resultCount{16}{53.3} & \resultCount{0}{0.0} & \resultCount{14}{46.7}
      & \resultCount{14}{46.7} & \resultCount{0}{0.0} & \resultCount{16}{53.3} \\
    Calcite          & 236
      & \textbf{\resultCount{227}{96.2}} & \resultCount{9}{3.8} & \resultCount{0}{0.0}
      & \resultCount{183}{77.5} & \resultCount{1}{0.4} & \resultCount{52}{22.0}
      & \resultCount{156}{66.1} & \resultCount{0}{0.0} & \resultCount{80}{33.9}
      & \resultCount{36}{15.3} & \resultCount{0}{0.0} & \resultCount{200}{84.7} \\
    R-Bot TPC-H      & 22
      & \textbf{\resultCount{16}{72.7}} & \resultCount{6}{27.3} & \resultCount{0}{0.0}
      & \textbf{\resultCount{16}{72.7}} & \resultCount{0}{0.0} & \resultCount{6}{27.3}
      & \resultCount{5}{22.7} & \resultCount{0}{0.0} & \resultCount{17}{77.3}
      & \resultCount{0}{0.0} & \resultCount{0}{0.0} & \resultCount{22}{100.0} \\
    R-Bot DSB        & 37
      & \textbf{\resultCount{24}{64.9}} & \resultCount{12}{32.4} & \resultCount{1}{2.7}
      & \resultCount{9}{24.3} & \resultCount{14}{37.8} & \resultCount{14}{37.8}
      & \resultCount{19}{51.4} & \resultCount{0}{0.0} & \resultCount{18}{48.6}
      & \resultCount{0}{0.0} & \resultCount{0}{0.0} & \resultCount{37}{100.0} \\
    TPC-DS variants  & 14
      & \textbf{\resultCount{9}{64.3}} & \resultCount{5}{35.7} & \resultCount{0}{0.0}
      & \resultCount{3}{21.4} & \resultCount{4}{28.6} & \resultCount{7}{50.0}
      & \resultCount{2}{14.3} & \resultCount{0}{0.0} & \resultCount{12}{85.7}
      & \resultCount{0}{0.0} & \resultCount{0}{0.0} & \resultCount{14}{100.0} \\
    WeTune           & 50
      & \textbf{\resultCount{34}{68.0}} & \resultCount{13}{26.0} & \resultCount{3}{6.0}
      & \resultCount{15}{30.0} & \resultCount{3}{6.0} & \resultCount{32}{64.0}
      & \resultCount{15}{30.0} & \resultCount{0}{0.0} & \resultCount{35}{70.0}
      & \resultCount{0}{0.0} & \resultCount{0}{0.0} & \resultCount{50}{100.0} \\
    \midrule
    Aggregate        & 389
      & \textbf{\resultCount{338}{86.9}} & \resultCount{47}{12.1} & \resultCount{4}{1.0}
      & \resultCount{249}{64.0} & \resultCount{22}{5.7} & \resultCount{118}{30.3}
      & \resultCount{213}{54.8} & \resultCount{0}{0.0} & \resultCount{176}{45.2}
      & \resultCount{50}{12.9} & \resultCount{0}{0.0} & \resultCount{339}{87.1} \\
    \bottomrule
  \end{tabular*}
\end{table*}

%% file: sections/related-work.tex
\section{Related Work}
\label{sec:related-work}

This section reviews order-aware query semantics and SQL equivalence
verification.  Table~\ref{tab:related-work-support} compares representative
foundations and systems along five dimensions.
\emph{Unbounded checking} ranges over databases of arbitrary finite cardinality.
\emph{Alternative ordered lists} records whether a model retains every order
permitted by SQL, while \emph{compositional top-$k$} further requires these
alternatives to survive slicing and remain visible to enclosing queries.  Thus,
although $[a,b]$ and $[b,a]$ denote the same bag, $\mathsf{fetch}_1$ may yield
either $[a]$ or $[b]$.  \emph{Bag semantics} denotes a multiplicity-based query
meaning.  \emph{Typed aggregate semantics} distinguishes concrete, NULL- and
type-aware aggregation from restricted encodings or uninterpreted aggregate
operators.

\begin{table*}[!t]
  \caption{Comparison of order-aware query models and SQL equivalence systems.}
  \label{tab:related-work-support}
  \centering
  \footnotesize
  \setlength{\tabcolsep}{4pt}
  \begin{tabular*}{\textwidth}{@{\extracolsep{\fill}}lccccc@{}}
    \toprule
    System or framework
           & Unbounded checking
           & Alternative ordered lists
           & Compositional top-$k$
           & Bag semantics
           & Typed aggregate semantics \\
    \midrule
    \multicolumn{6}{@{}l}{\emph{Order-aware semantics and optimization}} \\
    Coburn and Weddell~\cite{coburn1993logic}
      & \textemdash & \supportfull & \textemdash
      & \supportnone & \supportnone \\
    Slivinskas et al.~\cite{slivinskas2001foundation,slivinskas2001volcano}
      & \textemdash & \supportpartial & \textemdash
      & \supportfull & \supportnone \\
    Chinaei~\cite{chinaei2007ordered}
      & \textemdash & \supportpartial & \supportpartial
      & \supportfull & \supportpartial \\
    Amarilli et al.~\cite{amarilli2017possible}
      & \textemdash & \supportfull & \supportpartial
      & \supportfull & \supportnone \\
    \addlinespace[2pt]
    \multicolumn{6}{@{}l}{\emph{SQL query-pair verification}} \\
    Cosette/HoTTSQL~\cite{chu2017cosette,chu2017hottsql}
      & \supportfull & \supportnone & \supportnone
      & \supportfull & \supportnone \\
    SQLSolver~\cite{ding2023sqlsolver}
      & \supportfull & \supportpartial & \supportpartial
      & \supportfull & \supportpartial \\
    QED~\cite{wang2024qed}
      & \supportfull & \supportnone & \supportnone
      & \supportfull & \supportnone \\
    VeriEQL~\cite{he2024verieql}
      & \supportnone & \supportpartial & \supportnone
      & \supportfull & \supportpartial \\
    \textbf{\logos}
      & \supportfull & \supportfull & \supportfull
      & \supportfull & \supportfull \\
    \bottomrule
  \end{tabular*}

  \vspace{2pt}
  \begin{minipage}{0.98\textwidth}
    \scriptsize
    \supportfull~supported; \supportpartial~partially supported;
    \supportnone~unsupported; \textemdash~outside scope.  Ratings concern each
    published fragment; \logos's ratings concern its stated typed SQL fragment,
    rather than all PostgreSQL or ISO SQL.
  \end{minipage}
\end{table*}

\paragraph{Ordered relational semantics and order-aware optimization.}
Ordered relational models long predate modern SQL equivalence verifiers.
Coburn and Weddell propose an ordered-list model for graph-based data in which
a query denotes a nonempty set of tuple lists.  Declarative operators can thus
admit alternative legal orders, while lower-level access plans select more
specific results~\cite{coburn1993logic}.  Their objective is
wide-spectrum plan refinement: an implementation is valid when its possible
results are contained in those of the request.

Slivinskas, Jensen, and Snodgrass give each expression a list-valued semantics
and classify plan equivalence as list, multiset, or set equality.  The required
equivalence propagates inward from the query root.  Without an outermost
\texttt{ORDER BY}, the root requires multiset equality; a sort clears
\emph{OrderRequired} for its child, permitting multiset-preserving rules below
it.  For \texttt{ORDER BY} $A$, two results need only have the same multiset and
the same sequence of projected $A$-values, thereby identifying permutations
within ties~\cite{slivinskas2001foundation}.  This treatment does not extend
directly to top-$k$ operators such as \texttt{FETCH}: a prefix can distinguish
tied permutations, requiring either full list equality or an explicit
semantics of alternative lists rather than multiset equality.

Chinaei reduces an ordered-bag SQL algebra to set algebra by adding interpreted
multiplicity and order-priority columns~\cite{chinaei2007ordered}.  This enables
reuse of set-based containment techniques.  An augmented relation, however,
records one computed priority assignment, encoding order within ordinary set
rows.  Similarly, Amarilli et al. represent order-incomplete data as partially
ordered relations whose linear extensions are possible output lists, but each
po-relation fixes one underlying bag~\cite{amarilli2017possible}.  It therefore
cannot represent grouping or tie-sensitive top-$k$ when possible worlds induce
different result bags.

Mechanized bag semantics over lists imposes a known proof cost.
Bag equality becomes equality modulo permutation, while set equality also
requires duplicate elimination.  Proofs therefore proceed by list induction
while carrying permutation and duplicate-elimination invariants through
operators; HoTTSQL reports that this bookkeeping makes even simple rewrite
proofs lengthy and scales poorly to optimizer rules~\cite{chu2017hottsql}.
\logos instead retains all legal ordered lists, including alternatives with
different bags, while recovering multiplicity-level reasoning through a
closure-based lifting from bag equivalence to ordered-list equivalence at bag-closed
boundaries.

\paragraph{SQL query equivalence, rewrite verification, and optimizer testing.}
SQL query equivalence has been studied through proof assistants, algebraic
decision procedures, and solver-backed verification. Recent industrial evidence
nevertheless shows that real-world rewrite verification remains
challenging~\cite{narasayya2026qoverify}.

One line of work uses mechanized semantics and proof-oriented SQL provers.
Cosette and HoTTSQL are closely related systems from the same line of work:
Cosette emphasizes an automated prover pipeline that combines bounded
counterexample search with Coq-checked reasoning over K-relations and
UniNomials for bag-semantics rewrites, while HoTTSQL develops the HoTT-based
denotational semantics in more depth
~\cite{chu2017cosette,chu2017hottsql,chu2018axiomatic}. Their
reasoning is primarily about relational structure and multiplicities; scalar
predicates and expressions are largely treated as uninterpreted functions. The
semantics also does not give compositional meaning to alternative result
orders, leaving order-sensitive operators outside the main model. In addition,
the systems do not model the full behavior of industrial SQL dialects,
including null-sensitive three-valued logic and type coercions.

A second line develops algebraic or solver-backed decision procedures. QED
combines semiring normalization with first-order reasoning and gives unbounded
guarantees for its supported bag-semantics fragment.  It represents operators
such as \texttt{LIMIT} abstractly and can pass through matching instances by
recursive congruence, but does not model their order-sensitive behavior; casts
remain abstract, and aggregate functions are treated as uninterpreted
higher-order operators~\cite{wang2024qed}.  SQLSolver reduces U-expressions to
linear integer arithmetic and adds a divide-and-conquer ordered-bag procedure.
It gives concrete encodings for core aggregates, but its value reasoning is
restricted to the supported linear-integer-arithmetic fragment.  It simplifies ordering
operators, matches two
\texttt{ORDER BY} nodes only when their keys and following
\texttt{LIMIT}/\texttt{OFFSET} structures agree, and recursively invokes its
bag verifier below them~\cite{ding2023sqlsolver}.  Consequently, its sound
ordered reasoning is tied to compatible query syntax rather than a semantics
for arbitrarily nested order-sensitive queries.  VeriEQL instead gives bounded
SMT verification over finite tuple lists and uses bag equivalence for unordered
outputs.  Its five core aggregates have concrete encodings over an
integer/Boolean/NULL value model.  It encodes \texttt{LIMIT}, \texttt{OFFSET}, and \texttt{FETCH},
including in combination with \texttt{ORDER BY}, but observes list order only
for an outermost \texttt{ORDER BY}.  It therefore does not provide a
compositional semantics of legal result orders in which every tie-induced
top-$k$ result remains visible to enclosing queries~\cite{he2024verieql}.

Earlier symbolic systems include SPES,
which targets bag semantics, and EQUITAS, which targets set semantics; both use
SMT-backed symbolic representations to reason about concrete predicates, but
neither provides semantics for order-sensitive
queries~\cite{zhou2022spes,zhou2019equitas}.
The U-semiring decision procedure (UDP) supplies unbounded bag reasoning, while
abstracting scalar and aggregate semantics and omitting order-sensitive
operators~\cite{chu2018axiomatic}.
Testing and fuzzing tools such as NoREC expose optimizer bugs by constructing
differential or metamorphic test oracles~\cite{rigger2020norec}. Such techniques
are effective at finding concrete counterexamples, but they do not provide
universally quantified rewrite proofs.

\paragraph{LLM-assisted theorem proving.}
Recent theorem-proving agents use language models to retrieve relevant facts,
predict proof steps, or coordinate automated provers. LeanDojo studies
retrieval-based theorem proving over Lean libraries~\cite{yang2023leandojo},
and Thor integrates language models with automated theorem provers for formal
proof search~\cite{jiang2022thor}. These advances are orthogonal to our
SQL-specific semantic development, but inform \logos's use of language models
to guide proof search while leaving certificate acceptance to Rocq.

%% file: sections/conclusion.tex
\section{Conclusion}
\label{sec:conclusion}

We presented \logos, a proof-producing verifier for SQL rewrites over arbitrary
finite databases.  FormalSQL gives a mechanized, compositional semantics for
possible ordered lists and observable failures, while its closure-based
bag-to-list lifting theorem enables local multiplicity reasoning without
erasing order.  \logos combines typed lowering with LLM-guided Rocq proof
construction.  Across 389 query pairs, it solves 338 (\logosrate), compared
with 249 (\bestbaselinerate) for the strongest baseline.  These results
demonstrate the practical value of combining ordered-list semantics with local
bag reasoning.

%% file: main.bbl

\begin{thebibliography}{30}


\ifx \showCODEN    \undefined \def \showCODEN     #1{\unskip}     \fi
\ifx \showDOI      \undefined \def \showDOI       #1{#1}\fi
\ifx \showISBNx    \undefined \def \showISBNx     #1{\unskip}     \fi
\ifx \showISBNxiii \undefined \def \showISBNxiii  #1{\unskip}     \fi
\ifx \showISSN     \undefined \def \showISSN      #1{\unskip}     \fi
\ifx \showLCCN     \undefined \def \showLCCN      #1{\unskip}     \fi
\ifx \shownote     \undefined \def \shownote      #1{#1}          \fi
\ifx \showarticletitle \undefined \def \showarticletitle #1{#1}   \fi
\ifx \showURL      \undefined \def \showURL       {\relax}        \fi
\providecommand\bibfield[2]{#2}
\providecommand\bibinfo[2]{#2}
\providecommand\natexlab[1]{#1}
\providecommand\showeprint[2][]{arXiv:#2}

\bibitem[\protect\citeauthoryear{Amarilli, Ba, Deutch, and Senellart}{Amarilli
  et~al\mbox{.}}{2017}]%
        {amarilli2017possible}
\bibfield{author}{\bibinfo{person}{Antoine Amarilli},
  \bibinfo{person}{Mouhamadou~Lamine Ba}, \bibinfo{person}{Daniel Deutch},
  {and} \bibinfo{person}{Pierre Senellart}.} \bibinfo{year}{2017}\natexlab{}.
\newblock \showarticletitle{Possible and Certain Answers for Queries over
  Order-Incomplete Data}. In \bibinfo{booktitle}{\emph{24th International
  Symposium on Temporal Representation and Reasoning}}
  \emph{(\bibinfo{series}{Leibniz International Proceedings in Informatics})},
  Vol.~\bibinfo{volume}{90}. \bibinfo{publisher}{Schloss
  Dagstuhl--Leibniz-Zentrum f{\"{u}}r Informatik}, \bibinfo{pages}{4:1--4:19}.
\newblock
\urldef\tempurl%
\url{https://doi.org/10.4230/LIPIcs.TIME.2017.4}
\showDOI{\tempurl}


\bibitem[\protect\citeauthoryear{Bai, Alsudais, and Li}{Bai
  et~al\mbox{.}}{2023}]%
        {bai2023querybooster}
\bibfield{author}{\bibinfo{person}{Qiushi Bai}, \bibinfo{person}{Sadeem
  Alsudais}, {and} \bibinfo{person}{Chen Li}.} \bibinfo{year}{2023}\natexlab{}.
\newblock \showarticletitle{QueryBooster: Improving SQL Performance Using
  Middleware Services for Human-Centered Query Rewriting}.
\newblock \bibinfo{journal}{\emph{Proceedings of the VLDB Endowment}}
  \bibinfo{volume}{16}, \bibinfo{number}{11} (\bibinfo{year}{2023}),
  \bibinfo{pages}{2911--2924}.
\newblock
\urldef\tempurl%
\url{https://doi.org/10.14778/3611479.3611497}
\showDOI{\tempurl}


\bibitem[\protect\citeauthoryear{Barbosa, Barrett, Brain, Kremer, Lachnitt,
  Mann, Mohamed, Mohamed, Niemetz, N{\"{o}}tzli, Ozdemir, Preiner, Reynolds,
  Sheng, Tinelli, and Zohar}{Barbosa et~al\mbox{.}}{2022}]%
        {barbosa2022cvc5}
\bibfield{author}{\bibinfo{person}{Haniel Barbosa}, \bibinfo{person}{Clark~W.
  Barrett}, \bibinfo{person}{Martin Brain}, \bibinfo{person}{Gereon Kremer},
  \bibinfo{person}{Hanna Lachnitt}, \bibinfo{person}{Makai Mann},
  \bibinfo{person}{Abdalrhman Mohamed}, \bibinfo{person}{Mudathir Mohamed},
  \bibinfo{person}{Aina Niemetz}, \bibinfo{person}{Andres N{\"{o}}tzli},
  \bibinfo{person}{Alex Ozdemir}, \bibinfo{person}{Mathias Preiner},
  \bibinfo{person}{Andrew Reynolds}, \bibinfo{person}{Ying Sheng},
  \bibinfo{person}{Cesare Tinelli}, {and} \bibinfo{person}{Yoni Zohar}.}
  \bibinfo{year}{2022}\natexlab{}.
\newblock \showarticletitle{cvc5: A Versatile and Industrial-Strength {SMT}
  Solver}. In \bibinfo{booktitle}{\emph{Tools and Algorithms for the
  Construction and Analysis of Systems}} \emph{(\bibinfo{series}{Lecture Notes
  in Computer Science})}, Vol.~\bibinfo{volume}{13243}.
  \bibinfo{publisher}{Springer}, \bibinfo{pages}{415--442}.
\newblock
\urldef\tempurl%
\url{https://doi.org/10.1007/978-3-030-99524-9_24}
\showDOI{\tempurl}


\bibitem[\protect\citeauthoryear{Begoli, Camacho-Rodr{\'{i}}guez, Hyde, Mior,
  and Lemire}{Begoli et~al\mbox{.}}{2018}]%
        {begoli2018calcite}
\bibfield{author}{\bibinfo{person}{Edmon Begoli}, \bibinfo{person}{Jes{\'{u}}s
  Camacho-Rodr{\'{i}}guez}, \bibinfo{person}{Julian Hyde},
  \bibinfo{person}{Michael~J. Mior}, {and} \bibinfo{person}{Daniel Lemire}.}
  \bibinfo{year}{2018}\natexlab{}.
\newblock \showarticletitle{Apache Calcite: A Foundational Framework for
  Optimized Query Processing Over Heterogeneous Data Sources}. In
  \bibinfo{booktitle}{\emph{Proceedings of the 2018 International Conference on
  Management of Data}}. \bibinfo{publisher}{Association for Computing
  Machinery}, \bibinfo{pages}{221--230}.
\newblock
\urldef\tempurl%
\url{https://doi.org/10.1145/3183713.3190662}
\showDOI{\tempurl}


\bibitem[\protect\citeauthoryear{Benzaken and Contejean}{Benzaken and
  Contejean}{2019}]%
        {benzaken2019coqsql}
\bibfield{author}{\bibinfo{person}{V{\'{e}}ronique Benzaken} {and}
  \bibinfo{person}{{\'{E}}velyne Contejean}.} \bibinfo{year}{2019}\natexlab{}.
\newblock \showarticletitle{A Coq Mechanised Formal Semantics for Realistic SQL
  Queries: Formally Reconciling SQL and Bag Relational Algebra}. In
  \bibinfo{booktitle}{\emph{Proceedings of the 8th ACM SIGPLAN International
  Conference on Certified Programs and Proofs}}.
  \bibinfo{publisher}{Association for Computing Machinery},
  \bibinfo{pages}{249--261}.
\newblock
\urldef\tempurl%
\url{https://doi.org/10.1145/3293880.3294107}
\showDOI{\tempurl}


\bibitem[\protect\citeauthoryear{Chinaei}{Chinaei}{2007}]%
        {chinaei2007ordered}
\bibfield{author}{\bibinfo{person}{Hamid~R. Chinaei}.}
  \bibinfo{year}{2007}\natexlab{}.
\newblock \emph{\bibinfo{title}{An Ordered Bag Semantics for SQL}}.
\newblock Master's thesis. \bibinfo{school}{University of Waterloo}.
\newblock
\urldef\tempurl%
\url{https://hdl.handle.net/10012/3062}
\showURL{%
\tempurl}


\bibitem[\protect\citeauthoryear{Chu, Murphy, Roesch, Cheung, and Suciu}{Chu
  et~al\mbox{.}}{2018}]%
        {chu2018axiomatic}
\bibfield{author}{\bibinfo{person}{Shumo Chu}, \bibinfo{person}{Brendan
  Murphy}, \bibinfo{person}{Jared Roesch}, \bibinfo{person}{Alvin Cheung},
  {and} \bibinfo{person}{Dan Suciu}.} \bibinfo{year}{2018}\natexlab{}.
\newblock \showarticletitle{Axiomatic Foundations and Algorithms for Deciding
  Semantic Equivalences of SQL Queries}.
\newblock \bibinfo{journal}{\emph{Proceedings of the VLDB Endowment}}
  \bibinfo{volume}{11}, \bibinfo{number}{11} (\bibinfo{year}{2018}),
  \bibinfo{pages}{1482--1495}.
\newblock
\urldef\tempurl%
\url{https://doi.org/10.14778/3236187.3236200}
\showDOI{\tempurl}


\bibitem[\protect\citeauthoryear{Chu, Wang, Weitz, and Cheung}{Chu
  et~al\mbox{.}}{2017a}]%
        {chu2017cosette}
\bibfield{author}{\bibinfo{person}{Shumo Chu}, \bibinfo{person}{Chenglong
  Wang}, \bibinfo{person}{Konstantin Weitz}, {and} \bibinfo{person}{Alvin
  Cheung}.} \bibinfo{year}{2017}\natexlab{a}.
\newblock \showarticletitle{Cosette: An Automated Prover for SQL}. In
  \bibinfo{booktitle}{\emph{8th Biennial Conference on Innovative Data Systems
  Research}}. \bibinfo{publisher}{www.cidrdb.org}.
\newblock
\urldef\tempurl%
\url{https://www.cidrdb.org/cidr2017/papers/p51-chu-cidr17.pdf}
\showURL{%
\tempurl}


\bibitem[\protect\citeauthoryear{Chu, Weitz, Cheung, and Suciu}{Chu
  et~al\mbox{.}}{2017b}]%
        {chu2017hottsql}
\bibfield{author}{\bibinfo{person}{Shumo Chu}, \bibinfo{person}{Konstantin
  Weitz}, \bibinfo{person}{Alvin Cheung}, {and} \bibinfo{person}{Dan Suciu}.}
  \bibinfo{year}{2017}\natexlab{b}.
\newblock \showarticletitle{HoTTSQL: Proving Query Rewrites with Univalent SQL
  Semantics}. In \bibinfo{booktitle}{\emph{Proceedings of the 38th ACM SIGPLAN
  Conference on Programming Language Design and Implementation}}.
  \bibinfo{publisher}{Association for Computing Machinery},
  \bibinfo{pages}{510--524}.
\newblock
\urldef\tempurl%
\url{https://doi.org/10.1145/3062341.3062348}
\showDOI{\tempurl}


\bibitem[\protect\citeauthoryear{Coburn and Weddell}{Coburn and
  Weddell}{1993}]%
        {coburn1993logic}
\bibfield{author}{\bibinfo{person}{Neil Coburn} {and} \bibinfo{person}{Grant~E.
  Weddell}.} \bibinfo{year}{1993}\natexlab{}.
\newblock \showarticletitle{A Logic for Rule-Based Query Optimization in
  Graph-Based Data Models}. In \bibinfo{booktitle}{\emph{Deductive and
  Object-Oriented Databases}} \emph{(\bibinfo{series}{Lecture Notes in Computer
  Science})}, Vol.~\bibinfo{volume}{760}. \bibinfo{publisher}{Springer},
  \bibinfo{pages}{120--145}.
\newblock
\urldef\tempurl%
\url{https://doi.org/10.1007/3-540-57530-8_8}
\showDOI{\tempurl}


\bibitem[\protect\citeauthoryear{Ding, Chaudhuri, Gehrke, and Narasayya}{Ding
  et~al\mbox{.}}{2021}]%
        {ding2021dsb}
\bibfield{author}{\bibinfo{person}{Bailu Ding}, \bibinfo{person}{Surajit
  Chaudhuri}, \bibinfo{person}{Johannes Gehrke}, {and} \bibinfo{person}{Vivek
  Narasayya}.} \bibinfo{year}{2021}\natexlab{}.
\newblock \showarticletitle{{DSB}: A Decision Support Benchmark for
  Workload-Driven and Traditional Database Systems}.
\newblock \bibinfo{journal}{\emph{Proceedings of the VLDB Endowment}}
  \bibinfo{volume}{14}, \bibinfo{number}{13} (\bibinfo{year}{2021}),
  \bibinfo{pages}{3376--3388}.
\newblock
\urldef\tempurl%
\url{https://doi.org/10.14778/3484224.3484234}
\showDOI{\tempurl}


\bibitem[\protect\citeauthoryear{Ding, Wang, Yang, Zhang, Xu, Chen, Piskac, and
  Li}{Ding et~al\mbox{.}}{2023}]%
        {ding2023sqlsolver}
\bibfield{author}{\bibinfo{person}{Haoran Ding}, \bibinfo{person}{Zhaoguo
  Wang}, \bibinfo{person}{Yicun Yang}, \bibinfo{person}{Dexin Zhang},
  \bibinfo{person}{Zhenglin Xu}, \bibinfo{person}{Haibo Chen},
  \bibinfo{person}{Ruzica Piskac}, {and} \bibinfo{person}{Jinyang Li}.}
  \bibinfo{year}{2023}\natexlab{}.
\newblock \showarticletitle{Proving Query Equivalence Using Linear Integer
  Arithmetic}.
\newblock \bibinfo{journal}{\emph{Proceedings of the ACM on Management of
  Data}} \bibinfo{volume}{1}, \bibinfo{number}{4}, Article
  \bibinfo{articleno}{227} (\bibinfo{year}{2023}),
  \bibinfo{numpages}{26}~pages.
\newblock
\urldef\tempurl%
\url{https://doi.org/10.1145/3626768}
\showDOI{\tempurl}


\bibitem[\protect\citeauthoryear{He, Zhao, Wang, and Wang}{He
  et~al\mbox{.}}{2024}]%
        {he2024verieql}
\bibfield{author}{\bibinfo{person}{Yang He}, \bibinfo{person}{Pinhan Zhao},
  \bibinfo{person}{Xinyu Wang}, {and} \bibinfo{person}{Yuepeng Wang}.}
  \bibinfo{year}{2024}\natexlab{}.
\newblock \showarticletitle{VeriEQL: Bounded Equivalence Verification for
  Complex SQL Queries with Integrity Constraints}.
\newblock \bibinfo{journal}{\emph{Proceedings of the ACM on Programming
  Languages}} \bibinfo{volume}{8}, \bibinfo{number}{OOPSLA1}
  (\bibinfo{year}{2024}), \bibinfo{pages}{1071--1099}.
\newblock
\urldef\tempurl%
\url{https://doi.org/10.1145/3649849}
\showDOI{\tempurl}


\bibitem[\protect\citeauthoryear{Jiang, Li, Tworkowski, Czechowski,
  Odrzyg{\'{o}}{\'{z}}d{\'{z}}, Mi{\l}o{\'{s}}, Wu, and Jamnik}{Jiang
  et~al\mbox{.}}{2022}]%
        {jiang2022thor}
\bibfield{author}{\bibinfo{person}{Albert~Qiaochu Jiang},
  \bibinfo{person}{Wenda Li}, \bibinfo{person}{Szymon Tworkowski},
  \bibinfo{person}{Konrad Czechowski}, \bibinfo{person}{Tomasz
  Odrzyg{\'{o}}{\'{z}}d{\'{z}}}, \bibinfo{person}{Piotr Mi{\l}o{\'{s}}},
  \bibinfo{person}{Yuhuai Wu}, {and} \bibinfo{person}{Mateja Jamnik}.}
  \bibinfo{year}{2022}\natexlab{}.
\newblock \showarticletitle{Thor: Wielding Hammers to Integrate Language Models
  and Automated Theorem Provers}. In \bibinfo{booktitle}{\emph{Advances in
  Neural Information Processing Systems 35}}. \bibinfo{pages}{8360--8373}.
\newblock
\urldef\tempurl%
\url{https://doi.org/10.52202/068431-0608}
\showDOI{\tempurl}


\bibitem[\protect\citeauthoryear{Mao and Contributors}{Mao and
  Contributors}{2026}]%
        {sqlglot2026}
\bibfield{author}{\bibinfo{person}{Toby Mao} {and}
  \bibinfo{person}{Contributors}.} \bibinfo{year}{2026}\natexlab{}.
\newblock \bibinfo{title}{SQLGlot: Python SQL Parser and Transpiler, Version
  30.11.0}.
\newblock
\newblock
\urldef\tempurl%
\url{https://pypi.org/project/sqlglot/30.11.0/}
\showURL{%
\tempurl}
\newblock
\shownote{Accessed: 2026-07-07.}


\bibitem[\protect\citeauthoryear{Narasayya and Chaudhuri}{Narasayya and
  Chaudhuri}{2026}]%
        {narasayya2026qoverify}
\bibfield{author}{\bibinfo{person}{Vivek~R. Narasayya} {and}
  \bibinfo{person}{Surajit Chaudhuri}.} \bibinfo{year}{2026}\natexlab{}.
\newblock \showarticletitle{Leveraging Query Optimizers to Verify the Soundness
  of LLM-Based Query Rewrites for Real-World Workloads, and More!}. In
  \bibinfo{booktitle}{\emph{16th Conference on Innovative Data Systems
  Research}}. \bibinfo{publisher}{www.cidrdb.org}, \bibinfo{address}{Chaminade,
  CA, USA}, 12.
\newblock
\urldef\tempurl%
\url{https://www.cidrdb.org/cidr2026/papers/p33-narasayya.pdf}
\showURL{%
\tempurl}


\bibitem[\protect\citeauthoryear{Pirahesh, Hellerstein, and Hasan}{Pirahesh
  et~al\mbox{.}}{1992}]%
        {pirahesh1992starburst}
\bibfield{author}{\bibinfo{person}{Hamid Pirahesh}, \bibinfo{person}{Joseph~M.
  Hellerstein}, {and} \bibinfo{person}{Waqar Hasan}.}
  \bibinfo{year}{1992}\natexlab{}.
\newblock \showarticletitle{Extensible/Rule Based Query Rewrite Optimization in
  Starburst}. In \bibinfo{booktitle}{\emph{Proceedings of the 1992 ACM SIGMOD
  International Conference on Management of Data}}.
  \bibinfo{publisher}{Association for Computing Machinery},
  \bibinfo{pages}{39--48}.
\newblock
\urldef\tempurl%
\url{https://doi.org/10.1145/130283.130294}
\showDOI{\tempurl}


\bibitem[\protect\citeauthoryear{{PostgreSQL Global Development
  Group}}{{PostgreSQL Global Development Group}}{2026}]%
        {postgresql18select}
\bibfield{author}{\bibinfo{person}{{PostgreSQL Global Development Group}}.}
  \bibinfo{year}{2026}\natexlab{}.
\newblock \bibinfo{booktitle}{\emph{{PostgreSQL} 18.4: {SELECT}}}.
\newblock
\urldef\tempurl%
\url{https://www.postgresql.org/docs/18/sql-select.html}
\showURL{%
\tempurl}
\newblock
\shownote{Accessed: 2026-07-28.}


\bibitem[\protect\citeauthoryear{Rigger and Su}{Rigger and Su}{2020}]%
        {rigger2020norec}
\bibfield{author}{\bibinfo{person}{Manuel Rigger} {and}
  \bibinfo{person}{Zhendong Su}.} \bibinfo{year}{2020}\natexlab{}.
\newblock \showarticletitle{Detecting Optimization Bugs in Database Engines via
  Non-Optimizing Reference Engine Construction}. In
  \bibinfo{booktitle}{\emph{Proceedings of the 28th ACM Joint Meeting on
  European Software Engineering Conference and Symposium on the Foundations of
  Software Engineering}}. \bibinfo{publisher}{Association for Computing
  Machinery}, \bibinfo{pages}{1140--1152}.
\newblock
\urldef\tempurl%
\url{https://doi.org/10.1145/3368089.3409710}
\showDOI{\tempurl}


\bibitem[\protect\citeauthoryear{Slivinskas and Jensen}{Slivinskas and
  Jensen}{2001}]%
        {slivinskas2001volcano}
\bibfield{author}{\bibinfo{person}{Giedrius Slivinskas} {and}
  \bibinfo{person}{Christian~S. Jensen}.} \bibinfo{year}{2001}\natexlab{}.
\newblock \showarticletitle{Enhancing an Extensible Query Optimizer with
  Support for Multiple Equivalence Types}. In
  \bibinfo{booktitle}{\emph{Advances in Databases and Information Systems}}
  \emph{(\bibinfo{series}{Lecture Notes in Computer Science})},
  Vol.~\bibinfo{volume}{2151}. \bibinfo{publisher}{Springer},
  \bibinfo{pages}{55--69}.
\newblock
\urldef\tempurl%
\url{https://doi.org/10.1007/3-540-44803-9_6}
\showDOI{\tempurl}


\bibitem[\protect\citeauthoryear{Slivinskas, Jensen, and Snodgrass}{Slivinskas
  et~al\mbox{.}}{2001}]%
        {slivinskas2001foundation}
\bibfield{author}{\bibinfo{person}{Giedrius Slivinskas},
  \bibinfo{person}{Christian~S. Jensen}, {and} \bibinfo{person}{Richard~T.
  Snodgrass}.} \bibinfo{year}{2001}\natexlab{}.
\newblock \showarticletitle{A Foundation for Conventional and Temporal Query
  Optimization Addressing Duplicates and Ordering}.
\newblock \bibinfo{journal}{\emph{IEEE Transactions on Knowledge and Data
  Engineering}} \bibinfo{volume}{13}, \bibinfo{number}{1}
  (\bibinfo{year}{2001}), \bibinfo{pages}{21--49}.
\newblock
\urldef\tempurl%
\url{https://doi.org/10.1109/69.908979}
\showDOI{\tempurl}


\bibitem[\protect\citeauthoryear{Sun, Zhou, Li, Yu, Feng, and Zhang}{Sun
  et~al\mbox{.}}{2025}]%
        {sun2025rbot}
\bibfield{author}{\bibinfo{person}{Zhaoyan Sun}, \bibinfo{person}{Xuanhe Zhou},
  \bibinfo{person}{Guoliang Li}, \bibinfo{person}{Xiang Yu},
  \bibinfo{person}{Jianhua Feng}, {and} \bibinfo{person}{Yong Zhang}.}
  \bibinfo{year}{2025}\natexlab{}.
\newblock \showarticletitle{R-Bot: An LLM-Based Query Rewrite System}.
\newblock \bibinfo{journal}{\emph{Proceedings of the VLDB Endowment}}
  \bibinfo{volume}{18}, \bibinfo{number}{12} (\bibinfo{year}{2025}),
  \bibinfo{pages}{5031--5044}.
\newblock
\urldef\tempurl%
\url{https://doi.org/10.14778/3750601.3750625}
\showDOI{\tempurl}


\bibitem[\protect\citeauthoryear{Torlak and Bod{\'{\i}}k}{Torlak and
  Bod{\'{\i}}k}{2014}]%
        {torlak2014rosette}
\bibfield{author}{\bibinfo{person}{Emina Torlak} {and}
  \bibinfo{person}{Rastislav Bod{\'{\i}}k}.} \bibinfo{year}{2014}\natexlab{}.
\newblock \showarticletitle{A Lightweight Symbolic Virtual Machine for
  Solver-Aided Host Languages}. In \bibinfo{booktitle}{\emph{Proceedings of the
  35th ACM SIGPLAN Conference on Programming Language Design and
  Implementation}}. \bibinfo{publisher}{Association for Computing Machinery},
  \bibinfo{pages}{530--541}.
\newblock
\urldef\tempurl%
\url{https://doi.org/10.1145/2594291.2594340}
\showDOI{\tempurl}


\bibitem[\protect\citeauthoryear{{Transaction Processing Performance
  Council}}{{Transaction Processing Performance Council}}{2017}]%
        {tpc2017tpch}
\bibfield{author}{\bibinfo{person}{{Transaction Processing Performance
  Council}}.} \bibinfo{year}{2017}\natexlab{}.
\newblock \bibinfo{booktitle}{\emph{{TPC Benchmark H} (Decision Support):
  Standard Specification, Revision 2.17.3}}.
\newblock
\urldef\tempurl%
\url{https://www.tpc.org/tpc_documents_current_versions/pdf/TPC-H_v2.17.3.pdf}
\showURL{%
\tempurl}
\newblock
\shownote{Accessed: 2026-07-31.}


\bibitem[\protect\citeauthoryear{{Transaction Processing Performance
  Council}}{{Transaction Processing Performance Council}}{2024}]%
        {tpc2024tpcds}
\bibfield{author}{\bibinfo{person}{{Transaction Processing Performance
  Council}}.} \bibinfo{year}{2024}\natexlab{}.
\newblock \bibinfo{booktitle}{\emph{{TPC Benchmark DS}: Standard Specification,
  Version 4.0.0}}.
\newblock
\urldef\tempurl%
\url{https://www.tpc.org/TPC_Documents_Current_Versions/pdf/TPC-DS_v4.0.0.pdf}
\showURL{%
\tempurl}
\newblock
\shownote{Accessed: 2026-07-31.}


\bibitem[\protect\citeauthoryear{Wang, Pan, and Cheung}{Wang
  et~al\mbox{.}}{2024}]%
        {wang2024qed}
\bibfield{author}{\bibinfo{person}{Shuxian Wang}, \bibinfo{person}{Sicheng
  Pan}, {and} \bibinfo{person}{Alvin Cheung}.} \bibinfo{year}{2024}\natexlab{}.
\newblock \showarticletitle{QED: A Powerful Query Equivalence Decider for SQL}.
\newblock \bibinfo{journal}{\emph{Proceedings of the VLDB Endowment}}
  \bibinfo{volume}{17}, \bibinfo{number}{11} (\bibinfo{year}{2024}),
  \bibinfo{pages}{3602--3614}.
\newblock
\urldef\tempurl%
\url{https://doi.org/10.14778/3681954.3682024}
\showDOI{\tempurl}


\bibitem[\protect\citeauthoryear{Wang, Zhou, Yang, Ding, Hu, Ding, Tang, Chen,
  and Li}{Wang et~al\mbox{.}}{2022}]%
        {wang2022wetune}
\bibfield{author}{\bibinfo{person}{Zhaoguo Wang}, \bibinfo{person}{Zhou Zhou},
  \bibinfo{person}{Yicun Yang}, \bibinfo{person}{Haoran Ding},
  \bibinfo{person}{Gansen Hu}, \bibinfo{person}{Ding Ding},
  \bibinfo{person}{Chuzhe Tang}, \bibinfo{person}{Haibo Chen}, {and}
  \bibinfo{person}{Jinyang Li}.} \bibinfo{year}{2022}\natexlab{}.
\newblock \showarticletitle{WeTune: Automatic Discovery and Verification of
  Query Rewrite Rules}. In \bibinfo{booktitle}{\emph{Proceedings of the 2022
  International Conference on Management of Data}}.
  \bibinfo{publisher}{Association for Computing Machinery},
  \bibinfo{pages}{94--107}.
\newblock
\urldef\tempurl%
\url{https://doi.org/10.1145/3514221.3526125}
\showDOI{\tempurl}


\bibitem[\protect\citeauthoryear{Yang, Swope, Gu, Chalamala, Song, Yu, Godil,
  Prenger, and Anandkumar}{Yang et~al\mbox{.}}{2023}]%
        {yang2023leandojo}
\bibfield{author}{\bibinfo{person}{Kaiyu Yang}, \bibinfo{person}{Aidan Swope},
  \bibinfo{person}{Alex Gu}, \bibinfo{person}{Rahul Chalamala},
  \bibinfo{person}{Peiyang Song}, \bibinfo{person}{Shixing Yu},
  \bibinfo{person}{Saad Godil}, \bibinfo{person}{Ryan~J. Prenger}, {and}
  \bibinfo{person}{Animashree Anandkumar}.} \bibinfo{year}{2023}\natexlab{}.
\newblock \showarticletitle{LeanDojo: Theorem Proving with Retrieval-Augmented
  Language Models}. In \bibinfo{booktitle}{\emph{Advances in Neural Information
  Processing Systems 36}}. \bibinfo{pages}{21573--21612}.
\newblock
\urldef\tempurl%
\url{https://doi.org/10.52202/075280-0944}
\showDOI{\tempurl}


\bibitem[\protect\citeauthoryear{Zhou, Arulraj, Navathe, Harris, and Wu}{Zhou
  et~al\mbox{.}}{2022}]%
        {zhou2022spes}
\bibfield{author}{\bibinfo{person}{Qi Zhou}, \bibinfo{person}{Joy Arulraj},
  \bibinfo{person}{Shamkant~B. Navathe}, \bibinfo{person}{William Harris},
  {and} \bibinfo{person}{Jinpeng Wu}.} \bibinfo{year}{2022}\natexlab{}.
\newblock \showarticletitle{{SPES}: A Symbolic Approach to Proving Query
  Equivalence Under Bag Semantics}. In \bibinfo{booktitle}{\emph{2022 IEEE 38th
  International Conference on Data Engineering}}. \bibinfo{publisher}{IEEE},
  \bibinfo{pages}{2735--2748}.
\newblock
\urldef\tempurl%
\url{https://doi.org/10.1109/ICDE53745.2022.00250}
\showDOI{\tempurl}


\bibitem[\protect\citeauthoryear{Zhou, Arulraj, Navathe, Harris, and Xu}{Zhou
  et~al\mbox{.}}{2019}]%
        {zhou2019equitas}
\bibfield{author}{\bibinfo{person}{Qi Zhou}, \bibinfo{person}{Joy Arulraj},
  \bibinfo{person}{Shamkant~B. Navathe}, \bibinfo{person}{William Harris},
  {and} \bibinfo{person}{Dong Xu}.} \bibinfo{year}{2019}\natexlab{}.
\newblock \showarticletitle{Automated Verification of Query Equivalence Using
  Satisfiability Modulo Theories}.
\newblock \bibinfo{journal}{\emph{Proceedings of the VLDB Endowment}}
  \bibinfo{volume}{12}, \bibinfo{number}{11} (\bibinfo{year}{2019}),
  \bibinfo{pages}{1276--1288}.
\newblock
\urldef\tempurl%
\url{https://doi.org/10.14778/3342263.3342267}
\showDOI{\tempurl}


\end{thebibliography}
